\documentclass[journal]{IEEEtran}
\IEEEoverridecommandlockouts
\usepackage{xcolor}
\usepackage{amssymb}
\usepackage{diagbox}
\usepackage{amsmath}
\usepackage{adjustbox}
\usepackage{mleftright}
\usepackage{caption}
\usepackage{cite}
\usepackage[caption=false,font=footnotesize]{subfig}
\usepackage[english]{babel}
\usepackage{mathrsfs}
\usepackage{booktabs}

\newtheorem{corollary}{Corollary}

\newtheorem{theorem}{Theorem}
\newtheorem{lemma}{Lemma}
\newtheorem{remark}{Remark}

\newenvironment{proof}{{\indent \it {Proof}:\quad}}{\hfill $\square$\par}
\newtheorem{definition}{Definition}
\usepackage{amsmath}
\usepackage{graphicx}
\usepackage[colorlinks=true, allcolors=blue]{hyperref}
\usepackage{color}
\usepackage{lettrine}
 
\usepackage{amssymb}
\usepackage{array}

\title{Semi-Blind Fluid Antenna System: Port Selection via Statistical Analysis}

\author{Tianyu Han, Yongxu Zhu, Kai-Kit Wong, Gan Zheng, Chan-Byoung Chae, and Xiaohu You
\vspace{-6mm}
\thanks{This work has been accepted for publication in \emph{IEEE Transactions on Wireless Communications} under the title ``Partial Fluid Antenna System: Port Selection via Statistical Analysis,'' DOI: 10.1109/TWC.2026.3722171. \textcolor{blue}{The terms ``partial'' and ``semi-blind'' refer to the same concept. Readers should not be confused by this difference in terminology. We encourage readers to use the term ``semi-blind FAS'' when citing and referring this work, as it more appropriately describes the proposed approach.}}
\thanks{T. Han and G. Zheng are with the School of Engineering, University of Warwick, Coventry, UK (e-mail: \{tianyu.han, gan.zheng\}@warwick.ac.uk).}
\thanks{Y. Zhu and X. You are with the National Communications Research Laboratory, Southeast University, Nanjing, China (e-mail: \{yongxu.zhu, xhyu\}@seu.edu.cn).}
\thanks{K. Wong is affiliated with the Department of Electronic and Electrical Engineering, University College London, Torrington Place, UK and also with the Yonsei Frontier Lab., Yonsei University, Seoul 03722, South Korea (e-mail: kai-kit.wong@ucl.ac.uk)}
\thanks{C.-B. Chae is with the School of Integrated Technology, Yonsei University, Seoul, 03722 South Korea (e-mail: cbchae@yonsei.ac.kr).}
\thanks{Corresponding author: Yongxu Zhu}
}

\begin{document}
\maketitle

\begin{abstract}
The fluid antenna system (FAS) enables position reconfigurability, granting the transceiver access to a high-resolution spatial signal. A potential drawback of real-time FAS, however, is that it requires complete channel state information (CSI) for each FAS port at every communication time slot, an approach referred to as ideal-FAS. Recognizing the difficulties of achieving ideal-FAS, we propose a FAS scheme based on incomplete CSI, referred to as semi-blind FAS. This paper first introduces the spatial-temporal framework of FAS, upon which the proposed semi-blind FAS is developed. The proposed semi-blind FAS is lightweight and computationally efficient, scalable to an arbitrary number of ports and time slots, and operates without pre-training or deep learning structures. The scheme effectively exploits incomplete historical CSI to estimate the conditional distribution across all FAS ports at the desired time slot, thereby identifying the statistical optimal port for signal reception. Generally, the key idea of semi-blind FAS is to select the optimal port through conditional distribution analysis, from a statistical perspective, with optimality defined according to the scenario of interest. Moreover, we derive a closed-form expression for the placement of optimal port, where optimality is defined as the port that minimizes the outage probability, in the special case where only a single port CSI is available. Inspired by information-theoretic entropy, we further develop the residual entropy power ratio to characterize how physical parameters influence the performance gap between semi-blind FAS and ideal-FAS. Our analysis reveals that estimation performance depends not only on the number of sampled ports and time slots, but also on the specific indices of ports with given CSI at each time slot, i.e., the port sampling strategy. This critical factor has been largely overlooked in existing port estimation studies. In addition, we establish the Markov condition for the proposed semi-blind FAS, which provides insights into the physical design of time slot duration and the number of historical CSI samples required for semi-blind FAS to closely approximate ideal-FAS. Numerical results demonstrate that the proposed semi-blind FAS achieves performance comparable to, and in some cases indistinguishable from, that of ideal-FAS, while requiring significantly fewer port CSI measurements and lower port switching speeds.
\end{abstract}
	
\begin{IEEEkeywords}
Fluid antenna system, \textcolor{black}{semi-blind}, spatial-temporal correlation, \textcolor{black}{port selection}, residual entropy power ratio.
\end{IEEEkeywords}
	
\section{Introduction}
\subsection{Motivation}
\lettrine{M}{ultiple}-input multiple-output (MIMO) technology has been recognized as a cornerstone for enhancing wireless communication performance \cite{telatar1999capacity}. Recent advances have demonstrated that massive MIMO, which equips base stations (BSs) with hundreds of antennas, can further amplify system performance, most notably by improving spectral and energy efficiency, suppressing interference, and realizing channel hardening effects \cite{marzetta2016fundamentals}. However, these gains come at a cost. The deployment of large antenna arrays entails significant capital expenditures, including the cost of numerous antennas and associated radio frequency (RF) chains. Moreover, the computational burden and power consumption of signal processing over such large-scale systems introduces considerable operational complexity. These challenges, particularly pilot contamination, hardware impairments, and channel estimation inaccuracies, pose significant obstacles to the widespread commercial deployment of massive MIMO \cite{lu2014overview}. It is therefore desirable to develop solutions that can achieve performance comparable to massive MIMO, but without incurring its deployment and operational costs.

\subsection{FAS and Literature Review}
Motivated by the aforementioned challenges, a novel antenna architecture known as the fluid antenna system (FAS) has recently been proposed. FAS broadly encompasses any software-controlled fluidic, dielectric, or conductive structures, such as liquid-based antennas, pixel-based antennas, and metasurfaces that are capable of dynamically reconfiguring their shape, size, position, length, orientation, and other radiation characteristics. This reconfigurability enables the realization of spatial diversity within a physically constrained environment, and the concept of FAS was first introduced to the wireless communication domain by Wong \textit{et al.} in \cite{Fluid_antenna_system}. The fundamental concept of FAS involves the dense deployment of a large number of antenna ports within a limited physical space, while utilizing only a single RF chain \cite{Lee_fluid_Antenna}. Although this single-chain architecture imposes constraints on signal processing capabilities, existing studies have demonstrated that, with effective port selection strategies, FAS can achieve performance that is comparable to, and in some cases better than, conventional techniques that rely on multiple RF chains, such as maximum ratio combining (MRC) \cite{Fluid_antenna_system}. Since then, numerous studies have explored the integration of FAS with other advanced technologies. In \cite{lai2024fas}, the authors provided the performance analysis of FAS in conjunction with the reconfigurable intelligent surface (RIS). The work in \cite{yao2024fas} integrated cognitive radio (CR) with FAS to enhance system performance. \textcolor{black}{Recent experimental developments in FAS were presented in \cite{lu2025fluid}, while a comprehensive tutorial on FAS was provided in \cite{new2024tutorial}.}

FAS can also support multiple access by allowing multiple users to be served simultaneously on the same time-frequency resource, a concept referred to as fluid antenna multiple access (FAMA). In FAMA, the FAS selects the `best' port among all available ports, thereby improving the received signal-to-interference ratio (SIR). In \cite{fast_FAMA}, it was proposed that FAS ports can be switched at the symbol level, a method termed fast FAMA ($f$-FAMA). \textcolor{black}{Later, \cite{slow_FAMA} introduced slow FAMA ($s$-FAMA), where port switching occurs only when the fading channel changes.} \textcolor{black}{A recent comprehensive survey of FAMA is given in \cite{shah2024survey}.} 

It can be concluded that the primary difference between massive MIMO and FAS lies in the number of RF chains required. The latter requires significantly fewer RF chains, typically only a single RF chain per FAS, which consequently reduces both the capital and operational expenditures. However, this reduction in cost comes with certain trade-offs. Given the presence of only a single RF chain, the most straightforward method of CSI estimation in FAS must be performed sequentially, port by port. Meanwhile, the core requirement underlying all FAS related work is the perfect CSI for all FAS ports at each communication time slot. It appears to be an inherent and perhaps inescapable trade-off. To obtain perfect CSI for all ports simultaneously, multiple RF chains are required to enable concurrent CSI estimation. However, the inclusion of numerous RF chains significantly increases hardware cost and system complexity. \textcolor{black}{A similar strategy for reducing the number of RF chains while preserving the beamforming gain of massive MIMO is adopted in hybrid analog--digital beamforming \cite{el2014spatially}. This hybrid architectures face a similar challenge of estimating a large number of wireless channels using only a limited number of RF chains.}

\textcolor{black}{Recent studies in the FAS literature sought to address this challenge by leveraging deep learning techniques.} In \cite{chai2022port}, several deep learning methods were proposed to reconstruct the full port CSI based on the CSI observed from a limited subset of ports in FAS. Recently, the authors in \cite{zou2023online} introduced an online learning framework to tackle the port selection challenge in dynamically changing channel conditions. Additionally, the works in \cite{waqar2023deep} investigated the CSI estimation problem in $s$-FAMA scenarios using deep learning methods. \textcolor{black}{In \cite{10447499}, correlation-specific sub-networks are proposed to reconstruct the full FAS channel. The authors in \cite{11180048} investigate the amount of training required to reconstruct the CSI of all FAS ports.} \textcolor{black}{For channel estimation in hybrid analog–digital beamforming systems, many existing works exploit the sparse nature of millimeter-wave channels to address this issue \cite{alkhateeb2014channel, venugopal2017channel}. Only a limited number of studies consider channel estimation without relying on the sparsity assumption of the wireless channel \cite{bogale2015hybrid}, however, this approach inevitably leads to the non-convex optimization that in turn increase the computational complexity.}

\textcolor{black}{In summary, although the aforementioned approaches have demonstrated effectiveness in addressing the channel estimation issue in the literature of FAS and  hybrid analog–digital beamforming, they still face considerable challenges. \textcolor{black}{Specifically, in the context of FAS, the adoption of deep learning inevitably introduces substantial overhead in terms of training cost, parameter storage, and computational complexity. This leads to the implementation of FAS being constrained by the availability of advanced processing hardware.} In the domain of hybrid analog–digital beamforming, \textcolor{black}{most methods rely on channel sparsity assumptions.} Approaches that relax this assumption typically incur additional computational burden.} Recall that, it is our goal to achieve performance comparable to massive MIMO, but without incurring its operation complexity and development cost. 

\subsection{Contributions}
In summary, existing works typically treat the FAS port estimation problem as a prediction task within a deep learning framework, which might be a redundant undertaking. In this paper, we revisit this problem from a performance analysis perspective. This conceptual shift simplifies the problem formulation and avoids reliance on complex deep learning architectures. Specifically, rather than estimating the CSI of each individual FAS port, we analyze the conditional distribution associated with each port and thereby identify the statistically optimal port for signal reception. Once the optimal port has been identified, classical channel estimation methods can be applied to obtain the CSI of that port. {\color{black}The key idea of semi-blind FAS is summarized as follows:
\begin{center}
\textit{Identify the statistically optimal port through conditional distributional analysis based on incomplete historical CSI, rather than recovering the exact CSI values across all ports.}
\end{center}
It should be pointed out that semi-blind FAS goes beyond mere port selection and is better understood as:
\begin{center}
\textit{An FAS framework in which the system operates with incomplete CSI.}
\end{center}}

We refer to the case where perfect CSI is available for each FAS port at each time slot as ideal-FAS. The port selection criteria of ideal-FAS can be expressed as
\begin{align}
    k_{\mathrm{ideal}\text{-}\mathrm{FAS}}^{t}  = \arg\max_{k} |h_{k}^{t}|,
\end{align}
which requires the perfect CSI of each FAS port at each communication time slot. In contrast, semi-blind FAS only requires CSI \textcolor{black}{over} a sub-set of ports, $\bar{\boldsymbol{\mathrm{h}}}$, and find the \textcolor{black}{optimal} port that satisfy 
\begin{align}
    k_{\mathrm{\textcolor{black}{semi}}\text{-}\mathrm{FAS}}^{t} = \arg\min_{k} \mathcal{OP}_{|h_{k}^{t}|{\big |}\bar{\boldsymbol{\mathrm{h}}}=\boldsymbol{a}_{c}}.
\end{align}
\textcolor{black}{Here, the optimal port is defined as the one that minimizes the outage probability. However, this criterion is not unique and may vary depending on the specific task under consideration, as will be detailed in Section~\ref{subsec:semi-blind-FAS}.}

In summary, we have made the following contributions:
\begin{itemize}
    \item This paper first introduces the spatial-temporal structure of FAS, upon which the proposed semi-blind FAS is developed. Different from ideal-FAS, which requires full CSI over the full sets of ports at each time slot, semi-blind FAS finds the \textcolor{black}{statistical} optimal port only when a sub-set of ports' historical CSI is available. The proposed semi-blind FAS offers a computation efficient solution that operates without the need for pretraining. It is universally adaptable to any number of time slots and any number of ports with available CSI.
    \item Conditioned on the sub-set of ports with available historical CSI, we evaluate the distribution of each FAS port at the desired time slot to compute its outage probability. \textcolor{black}{Accordingly, the optimal port is defined as the one that minimizes the outage probability. This definition is not unique; more generally, semi-blind FAS selects ports based on statistical analysis, and the definition of optimality may vary with the adopted performance criterion.}
    \item \textcolor{black}{We also consider the special case where only the CSI of a single port is available, for which closed-form expressions for both the optimal correlation coefficient and the placement of the optimal port are derived, with optimality defined in terms of minimizing the outage probability.}
    \item Inspired by information theory, we propose a prior evaluation metric, residual entropy power ratio, as a measure of the performance of the proposed semi-blind FAS. Moreover, according to the residual entropy power ratio, it is found that the performance of FAS port estimation, is not solely determined by the number or percentage of ports with available CSI, rather, critically depends on their specific port index. Therefore, an accurate characterization of the estimation performance requires the use of residual entropy power ratio, rather than merely stating the number or percentage of ports with available CSI.
    \item The Markov condition of semi-blind FAS is also provided, which provide useful insights into the trade-off between the length of the time interval and the number of historical CSI required.
\end{itemize}

The remainder of the paper is organized as follows. In Section~\ref{section:System_Model}, we introduce the spatial-temporal framework of FAS. Then Section~\ref{section:partial_FAS} provides the proposed semi-blind FAS structure. Some physical insights of the proposed semi-blind FAS is provided in Section~\ref{section:Physical_Insights}. In Section~\ref{section:results}, numerical results are presented and finally, we
provide some concluding remarks in Section~\ref{section:Conclusion}.
	
\section{System Model}\label{section:System_Model}
\begin{figure*}[!htbp]
    \centering
    \includegraphics[width=1.0\linewidth]{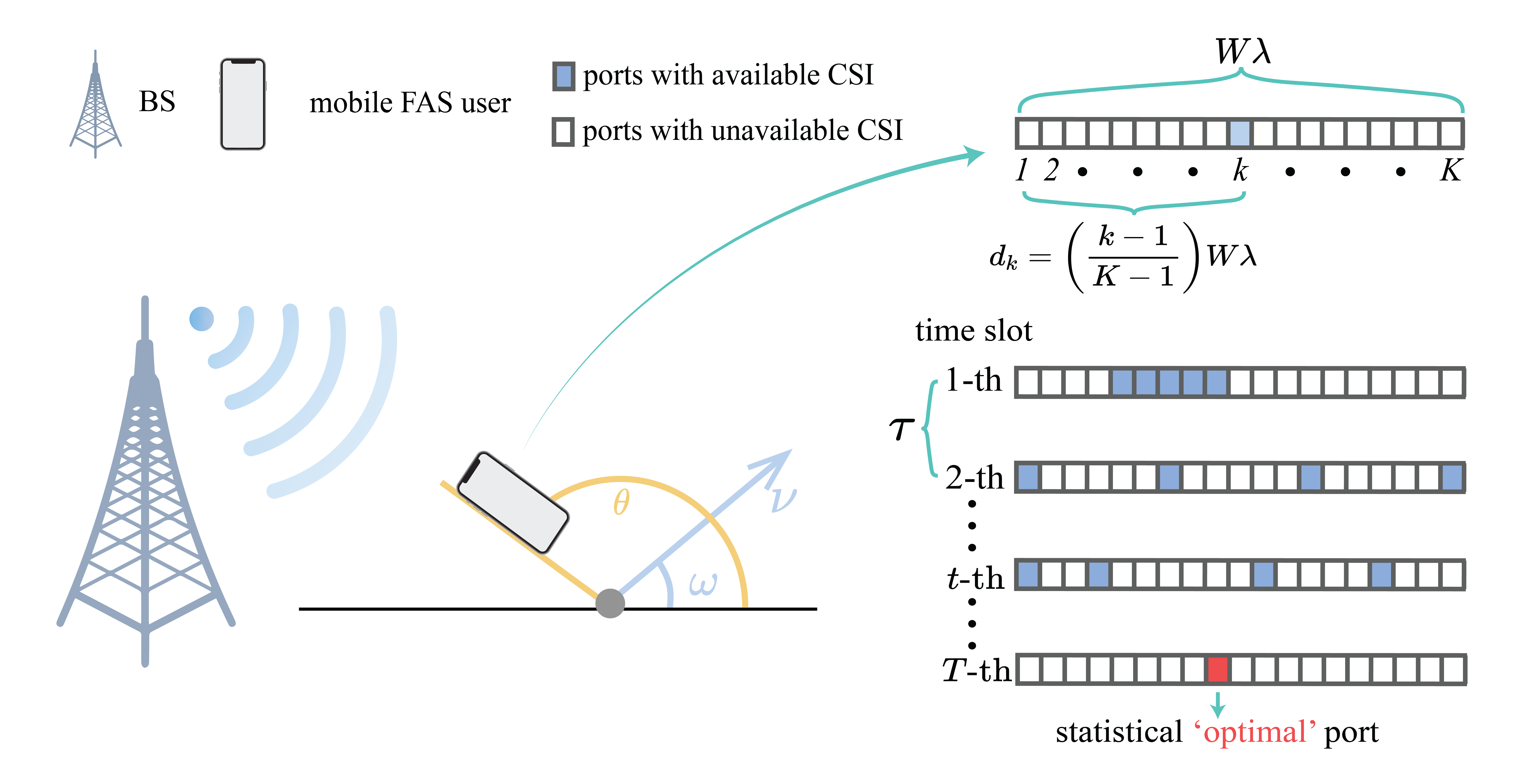}
    \caption{\textcolor{black}{The concept of the proposed semi-blind FAS. The left-hand side illustrates the geometric relationship between the BS and mobile FAS user, while the right-hand side shows the spatial-temporal structure of FAS.}}
    \label{fig:system_model}
\end{figure*}

In this paper, we consider a downlink\footnote{\textcolor{black}{The proposed semi-blind FAS framework is general and can be extended to the uplink; the downlink scenario is considered here for clarity and to facilitate a more coherent presentation.}} system in which the BS is equipped with a single, fixed-position antenna, while the mobile user moves at a velocity $\nu$ and angle $\omega$ with respect to the horizontal axis. The mobile user is equipped with a 1D FAS, oriented at an angle $\theta$ with respect to the horizontal axis. Taking into account the time cost associated with CSI estimation, we assume that CSI is available for only a subset of FAS ports over the previous $(T-1)$ time slots, but not at the $T$-th time slot.  Consequently, according to the incomplete historical port CSI, we evaluate the performance of each FAS port at the $T$-th time slot, thereby finding the statistical \textcolor{black}{optimal} port. Fig.~\ref{fig:system_model} depicts our model.

\subsection{FAS Physical Model}

The 1D FAS consists of $K$ ports that are uniformly distributed along a linear space of length $W\lambda$, where $W$ is the scaling factor and $\lambda$ denotes the communication wavelength. As illustrated on the right-hand side of Fig.~\ref{fig:system_model}, the first port serves as the reference port, relative to which the displacement of $k$-th port is measured, denoted as
\begin{align}
    d_{k} = \left(\frac{k-1}{K-1}\right)W\lambda.
\end{align}
Consequently, the distance between the $k$-th port and $k'$-th port is given by
\begin{align}
    \delta_{k,k'} &= \frac{W\lambda}{K-1}\left|k-k'\right|.
\end{align}
Furthermore, we model the FAS channel in the time domain as a discrete-time stochastic process, evolving over distinct time slots. Accordingly, the time interval between the $t$-th and $t'$-th time slots is given by
\begin{align}
\Delta_{t,t'} = \tau|t - t'|,
\end{align}
where $\tau$ represents the duration of each time slot.\footnote{The duration of the time slot, $\tau$, in this paper, can also be interpreted as the time interval of port switching. For clarity of representation and to provide more insights into the system design, this explanation is omitted.}

\subsection{FAS Channel Model}
Let  $h_{k}^{t}$ denotes the channel of the $k$-th FAS port at the $t$-th time slot, which is modeled as a circularly symmetric complex Gaussian random variable, given by
\begin{align}
    h_{k}^{t} = x_{k}^{t}+jy_{k}^{t},
\end{align}
where both $x_{k}^{t}$ and $y_{k}^{t}$ are zero-mean Gaussian random variables with variance of $\frac{\sigma_{0}^2}{2}$. Consequently, the received signal of the $k$-th FAS port at the $t$-th time slot, is given by
\begin{align}
    z_{k}^{t} = s^{t}h_{k}^{t}+\eta_{k}^{t},
\end{align}
where $s^{t}$ is the transmitted signal at the $t$-th time slot, and $\eta_{k}^{t}$ is the complex additive white Gaussian noise (AWGN) of the $k$-th port at the $t$-th time slot, with zero mean and variance of $\sigma_{\eta}^2$. We consider a constant modulus modulation scheme\footnote{\textcolor{black}{The proposed semi-blind FAS can also be applied to other variable amplitude modulation schemes (e.g., QAM) by disregarding the symbol power in the port selection process. This assumption is adopted here only for the sake of readability throughout the paper.}}, where the power of the information symbol is assumed to remain identical across all time slots, i.e., $\mathrm{E}[|s^{t}|^2] = \sigma_{s}^2$, $\forall t$. Subsequently, the average signal-to-noise ratio (SNR) is defined as
\begin{align}\label{eq:average_SNR}
    \Gamma = \sigma_{0}^2\frac{\sigma_{s}^2}{\sigma_{\eta}^2}.
\end{align}

Let the full set of port indices, sorted in increasing order, be denoted by the set $\mathcal{K} = \{1, \dots, K\}$. The indices of FAS ports for which the CSI is available at the $t$-th time slot are denoted by the set $\bar{\mathcal{K}}^{t} \subseteq \mathcal{K}$, also sorted in increasing order. Similarly, the indices of ports for which the CSI is unavailable at the $t$-th time slot are given by the set $\hat{\mathcal{K}}^{t} = \mathcal{K} \setminus \bar{\mathcal{K}}^{t}$.

Subsequently, we define the row vectors $\boldsymbol{\mathrm{h}}_{\mathcal{\bar{K}}^{t}}^{t}$ and $\boldsymbol{\mathrm{h}}_{\mathcal{\hat{K}}^{t}}^{t}$
to represent the FAS channels with available and unavailable CSI at the $t$-th time slot, respectively, as follows
\begin{align}\boldsymbol{\mathrm{h}}_{\mathcal{\bar{K}}^{t}}^{t}&= \boldsymbol{\mathrm{x}}_{\mathcal{\bar{K}}^{t}}^{t}+j\boldsymbol{\mathrm{y}}_{\mathcal{\bar{K}}^{t}}^{t}, \\
\boldsymbol{\mathrm{h}}_{\mathcal{\hat{K}}^{t}}^{t}&= \boldsymbol{\mathrm{x}}_{\mathcal{\hat{K}}^{t}}^{t}+j\boldsymbol{\mathrm{y}}_{\mathcal{\hat{K}}^{t}}^{t},
\end{align}
where $\boldsymbol{x}_{\bar{\mathcal{K}}^{t}}^{t} = \left[ x_{k}^{t} \right]_{k \in \bar{\mathcal{K}}^t}$, $\boldsymbol{y}_{\bar{\mathcal{K}}^{t}}^{t} = \left[ y_{k}^{t} \right]_{k \in \bar{\mathcal{K}}^t}$, $\boldsymbol{\mathrm{x}}_{\mathcal{\hat{K}}^{t}}^{t} = \left[ x_{k}^{t} \right]_{k \in \hat{\mathcal{K}}^t}$ and $\boldsymbol{\mathrm{y}}_{\mathcal{\hat{K}}^{t}}^{t} = \left[ y_{k}^{t} \right]_{k \in \hat{\mathcal{K}}^t}$ are the in-phase and quadrature components of the corresponding channels, respectively, in the row vector form.

{\color{black}Let $\mathcal{T} = \{1, \dots, T\}$ denote the set of time indices. Define $\bar{\mathcal{T}} \subseteq \mathcal{T}$ as the subset of time indices for which CSI is available for a non-empty subset of ports, and let $\hat{\mathcal{T}} = \mathcal{T} \setminus \bar{\mathcal{T}}$ denote the complementary subset of time indices for which CSI is unavailable for all ports. The number of historical time slots is defined as $\iota \triangleq |\bar{\mathcal{T}}|$. Furthermore, let $|\bar{\mathcal{K}}^{t}|$ denote the number of sampling ports at time slot $t$, corresponding to the number of ports with observed CSI at time slot $t$. Consequently, the FAS channel vectors corresponding to available and unavailable CSI, denoted by $\bar{\boldsymbol{h}}$ and $\hat{\boldsymbol{h}}$, respectively, can be expressed as:
\begin{align}
\bar{\boldsymbol{\mathrm{h}}} & = \boldsymbol{\bar{x}} + j\boldsymbol{\bar{y}},\\    \hat{\boldsymbol{\mathrm{h}}} & = \boldsymbol{\hat{x}}+j\boldsymbol{\hat{y}},
\end{align}
where $\boldsymbol{\bar{x}} = \left[ \boldsymbol{x}_{\bar{\mathcal{K}}^{t}}^{t} \right]_{t \in \bar{\mathcal{T}}}$, $\boldsymbol{\bar{y}} = \left[ \boldsymbol{y}_{\bar{\mathcal{K}}^{t}}^{t} \right]_{t \in \bar{\mathcal{T}}}$, $\boldsymbol{\hat{x}} = \left[ \hat{\boldsymbol{x}} \right]_{t \in \hat{\mathcal{T}}}$, and $\boldsymbol{\hat{y}} = \left[ \hat{\boldsymbol{y}} \right]_{t \in \hat{\mathcal{T}}}$  are the in-phase and quadrature components of the corresponding channels, respectively, in the row vector form.

The central objective of the proposed semi-blind FAS is to evaluate the statistical performance of the FAS channels with unavailable CSI, $\hat{\boldsymbol{h}}$, given the historical CSI of the FAS channels, $\bar{\boldsymbol{h}}$, and therefore determine the optimal port for signal reception. Without loss of generalization\footnote{The proposed framework can be readily extended to other related scenarios. For instance, at a given time slot, one may evaluate the performance of FAS ports with unavailable CSI conditioned on partial CSI availability across a subset of ports. This case is omitted due to space limitations.}, in this paper, we only interested at the $T$-th time slot, that is, we only estimate the statistical performance of each FAS port at the $T$-th time slot. For notational simplicity, we assume that the $T$-th time slot is the only time slot at which CSI is unavailable for all FAS ports. Under this assumption, and based on the notation defined above, we have $\hat{\mathcal{T}} = \{T\}$, $\hat{\mathcal{K}}^{T} = \mathcal{K}$, and $\hat{\boldsymbol{h}} = \left[\boldsymbol{h}_{\hat{\mathcal{K}}^{T}}^{T}\right]$.

Combining the channels with available and unavailable CSI, the FAS spatial-temporal channel can be expressed as
\begin{align}
    \boldsymbol{\mathrm{h}} = \left[\hat{\boldsymbol{\mathrm{h}}},\bar{\boldsymbol{\mathrm{h}}}\right].
\end{align}
Let $\mathcal{L}(\cdot)$ denotes the distribution of random variables. Consequently, $\boldsymbol{\mathrm{h}}$ follows a complex Gaussian distribution as $\mathcal{L}\left(\boldsymbol{\mathrm{h}}\right) = \mathcal{CN}\left(\boldsymbol{0},\sigma_{0}^2\boldsymbol{\Sigma}\right)$, where $\boldsymbol{\Sigma}$ is the FAS spatial-temporal correlation matrix, which will be specified in the following subsection.}

\subsection{FAS Spatial-Temporal Correlation Structure}
The main feature of FAS is the spatial correlated channel, which has been well understood in previous works. However, it is worth noting that the FAS is not only spatial correlated due to its compactness, but also temporal correlated naturally on the time domain. As a result, we have the spatial-temporal correlation coefficient\footnote{\textcolor{black}{The proposed semi-blind FAS is not restricted to the correlation function presented in this paper. It is applicable more broadly, including in non-rich scattering scenarios such as millimeter-wave communications, where the correlation coefficient may not have a closed-form expression but can be empirically measured and subsequently computed.}} between the $k$-th port at the $t$-th time slot, and $k'$-th FAS port at the $t'$-th time slot, expressed as \cite[Eq.(2.129)]{stuber2001principles}
\begin{align}\label{eq:space_time_correlation_function}
    \phi\left(\delta_{k,k'},\Delta_{t,t'}\right) = J_{0}\left(\sqrt{\alpha^2+\beta^2-2\alpha\beta\cos{\left(\theta-\omega\right)}}\right),
\end{align}
where $J_0(\cdot)$ is the zero-order Bessel function of the first kind, $\alpha$ and $\beta$ denote the spatial and temporal correlation parameters, respectively, given by
\begin{equation}\label{eq:alpha_beta}
\left\{\begin{aligned}
\alpha&= 2\pi\frac{ \delta_{k,k'}}{\lambda}, \\
\beta &= 2\pi\frac{\nu\Delta_{t,t'}}{\lambda}.
\end{aligned}\right.
\end{equation}
Consequently, the FAS spatial-temporal correlation matrix, can be expressed as
\begin{align}
    \boldsymbol{\Sigma} = 
    \left[
    \begin{array}{ccccc}
         \boldsymbol{\Sigma}_{\mathcal{\hat{K}}^{T}} &\boldsymbol{\Sigma}_{\mathcal{\hat{K}}^{T},\mathcal{\bar{K}}^{T\text{-}1}} &\cdots &\boldsymbol{\Sigma}_{\mathcal{\hat{K}}^{T},\mathcal{\bar{K}}^{1}}\\ \boldsymbol{\Sigma}_{\mathcal{\hat{K}}^{T},\mathcal{\bar{K}}^{T\text{-}1}}^{\top} & \boldsymbol{\Sigma}_{{\mathcal{\bar{K}}^{T\text{-}1}}} & & \boldsymbol{\Sigma}_{\mathcal{\bar{K}}^{T\text{-}1},\mathcal{\bar{K}}^{1}} \\
         \vdots &  &\ddots & \\ \boldsymbol{\Sigma}_{\mathcal{\hat{K}}^{T},\mathcal{\bar{K}}^{1}}^{\top} & \boldsymbol{\Sigma}_{\mathcal{\bar{K}}^{T\text{-}1},\mathcal{\bar{K}}^{1}}^{\top}& \cdots&\boldsymbol{\Sigma}_{{\mathcal{\bar{K}}^{1}}}
    \end{array}
    \right],
\end{align}
where each sub-matrix is given by
\begin{equation}\label{}
\left\{\begin{aligned}
\left(\boldsymbol{\Sigma}_{\mathcal{\hat{K}}^{T}} \right)_{n,m} &= \phi\left(\delta_{[\mathcal{\hat{K}}^{T}]_{n},[\mathcal{\hat{K}}^{T}]_{m}},0\right),\\
\left(\boldsymbol{\Sigma}_{\mathcal{\bar{K}}^{t}} \right)_{n,m} &= \phi\left(\delta_{[\mathcal{\bar{K}}^{t}]_{n},[\mathcal{\bar{K}}^{t}]_{(m)}},0\right), \\
\left(\boldsymbol{\Sigma}_{\mathcal{\hat{K}}^{T},\mathcal{\bar{K}}^{t}} \right)_{n,m} &= \phi\left(\delta_{[\mathcal{\hat{K}}^{T}]_{n},[\mathcal{\bar{K}}^{t}]_{m}},\Delta_{T,t}\right).
\end{aligned}\right.
\end{equation}
Here, $[\cdot]_{n}$ denotes the $n$-th member of the corresponding set, while $(\cdot)_{n,m}$ represents the element at the $(n,m)$-th position of the corresponding matrix. Sub-matrix $\boldsymbol{\Sigma}_{\mathcal{\hat{K}}^{T}}$ and $\boldsymbol{\Sigma}_{\mathcal{\bar{K}}^{T}}$ include the spatial-correlation of the channels with unavailable and available CSI, respectively. $\boldsymbol{\Sigma}_{\mathcal{\hat{K}}^{T},\mathcal{\bar{K}}^{t}}$ contains the spatial-temporal correlation between the channels with unavailable CSI and channels with available CSI.

It is convenient to represent the spatial-temporal correlation matrix $\boldsymbol{\Sigma}$ in the following composed form as,
\begin{align}
    \boldsymbol{\Sigma} = 
    \left[
    \begin{array}{cc}
         \boldsymbol{\Sigma}_{\hat{\boldsymbol{\mathrm{h}}}} & \boldsymbol{\Sigma}_{\hat{\boldsymbol{\mathrm{h}}},\bar{\boldsymbol{\mathrm{h}}}} \\
         \boldsymbol{\Sigma}_{\hat{\boldsymbol{\mathrm{h}}},\bar{\boldsymbol{\mathrm{h}}}}^{\top} & \boldsymbol{\Sigma}_{\bar{\boldsymbol{\mathrm{h}}}}
    \end{array}
    \right],
\end{align}
where 
\begin{equation}\label{eq:correlation_Matirx}
\left\{\begin{aligned}
\boldsymbol{\Sigma}_{\hat{\boldsymbol{\mathrm{h}}}} &= \boldsymbol{\Sigma}_{\mathcal{\hat{K}}^{T}},\\
\boldsymbol{\Sigma}_{\hat{\boldsymbol{\mathrm{h}}},\bar{\boldsymbol{\mathrm{h}}}} &= \left[ \boldsymbol{\Sigma}_{\mathcal{\hat{K}}^{T},\mathcal{\bar{K}}^{T\text{-}1}},\cdots,\boldsymbol{\Sigma}_{\mathcal{\hat{K}}^{T},\mathcal{\bar{K}}^{1}}\right], \\
\boldsymbol{\Sigma}_{\bar{\boldsymbol{\mathrm{h}}}} &= \left[
    \begin{array}{cccc}
           \boldsymbol{\Sigma}_{{\mathcal{\bar{K}}^{T\text{-}1}}} & \cdots& \boldsymbol{\Sigma}_{\mathcal{\bar{K}}^{T\text{-}1},\mathcal{\bar{K}}^{1}} \\
           \vdots&\ddots &\vdots \\  \boldsymbol{\Sigma}_{\mathcal{\bar{K}}^{T\text{-}1},\mathcal{\bar{K}}^{1}}^{\top}& \cdots&\boldsymbol{\Sigma}_{{\mathcal{\bar{K}}^{t}}}
    \end{array}
    \right].
\end{aligned}\right.
\end{equation}
Here, the correlation among the FAS \textcolor{black}{channels} with unavailable CSI is characterized through $\boldsymbol{\Sigma}_{\hat{\boldsymbol{\mathrm{h}}}}$. Similarly, the correlation among the FAS \textcolor{black}{channels} with available CSI is represented by $\boldsymbol{\Sigma}_{\bar{\boldsymbol{\mathrm{h}}}}$. Finally, the cross correlation between the FAS \textcolor{black}{channels} with available CSI and unavailable CSI is described in $\boldsymbol{\Sigma}_{\hat{\boldsymbol{\mathrm{h}}},\bar{\boldsymbol{\mathrm{h}}}}$.

\section{Semi-Blind FAS}\label{section:partial_FAS}
\textcolor{black}{In this section, we present the achievement of the proposed semi-blind FAS. First, we characterize the conditional distribution of the FAS channel at the target time slot, $\hat{\boldsymbol{\mathrm{h}}}$, given limited historical CSI, $\bar{\boldsymbol{\mathrm{h}}}$, in terms of both the probability density function (PDF) and cumulative distribution function (CDF). Based on these expressions, the outage probability of each FAS port is then derived. Subsequently, the port selection criteria for the semi-blind FAS is developed. We further consider a special case in which only the CSI of a single port is available and derive the corresponding optimal solution.}

\subsection{Conditional Distribution of FAS Channel}
With the aim to provide the conditional distribution of $\hat{\boldsymbol{h}}|\bar{\boldsymbol{h}}$, we found that the conditional distribution on the in-phase component, as well as the quadrature component, is useful in derivation, which is provided in the following lemma.
 
\begin{lemma}\label{lemma:Law_Conditional_Normal}
    Given the observed in-phase and quadrature components, $\bar{\boldsymbol{x}}= \boldsymbol{\mathrm{a}_{x}}$ and $\bar{\boldsymbol{y}}= \boldsymbol{\mathrm{a}_{y}}$, the conditional distributions of $\hat{\boldsymbol{x}}|\bar{\boldsymbol{x}} = \boldsymbol{\mathrm{a}_{x}}$ and $\hat{\boldsymbol{y}}|\bar{\boldsymbol{y}} = \boldsymbol{\mathrm{a}_{y}}$, respectively, are given by
    \begin{align}\mathcal{L}\left(\hat{\boldsymbol{x}}|\bar{\boldsymbol{x}} = \boldsymbol{\mathrm{a}_{x}}\right) = \mathcal{N}\left(\boldsymbol{\mathrm{\mu}}_{\hat{\boldsymbol{x}}|\bar{\boldsymbol{x}}=\boldsymbol{\mathrm{a}_{x}}},\frac{\sigma_{0}^2}{2}\boldsymbol{\Sigma}_{\hat{\boldsymbol{x}}|\bar{\boldsymbol{x}}}\right),
    \\\mathcal{L}\left(\hat{\boldsymbol{y}}|\bar{\boldsymbol{y}} = \boldsymbol{\mathrm{a}_{y}}\right) = \mathcal{N}\left(\boldsymbol{\mathrm{\mu}}_{\hat{\boldsymbol{y}}|\bar{\boldsymbol{y}}=\boldsymbol{\mathrm{a}_{y}}},\frac{\sigma_{0}^2}{2}\boldsymbol{\Sigma}_{\hat{\boldsymbol{y}}|\bar{\boldsymbol{y}}}\right),
    \end{align}
    where \begin{equation}\label{}
    \left\{\begin{aligned}
    \boldsymbol{\mathrm{\mu}}_{\hat{\boldsymbol{x}}|\bar{\boldsymbol{x}}=\boldsymbol{\mathrm{a}_{x}}} &=\boldsymbol{\Sigma}_{\hat{\boldsymbol{\mathrm{h}}},\bar{\boldsymbol{\mathrm{h}}}} \boldsymbol{\Sigma}_{\bar{\boldsymbol{\mathrm{h}}}}^{-1} \boldsymbol{\mathrm{a}_{x}},\\
    \boldsymbol{\mathrm{\mu}}_{\hat{\boldsymbol{y}}|\bar{\boldsymbol{y}}=\boldsymbol{\mathrm{a}_{y}}} &=\boldsymbol{\Sigma}_{\hat{\boldsymbol{\mathrm{h}}},\bar{\boldsymbol{\mathrm{h}}}} \boldsymbol{\Sigma}_{\bar{\boldsymbol{\mathrm{h}}}}^{-1} \boldsymbol{\mathrm{a}_{y}},\\
    \boldsymbol{\Sigma}_{\hat{\boldsymbol{x}}|\bar{\boldsymbol{x}}}&= \boldsymbol{\Sigma}_{\hat{\boldsymbol{y}}|\bar{\boldsymbol{y}}} \\ &=\boldsymbol{\Sigma}_{\hat{\boldsymbol{\mathrm{h}}}} -\boldsymbol{\Sigma}_{\hat{\boldsymbol{\mathrm{h}}},\bar{\boldsymbol{\mathrm{h}}}} \boldsymbol{\Sigma}_{\bar{\boldsymbol{\mathrm{h}}}}^{-1} \boldsymbol{\Sigma}_{\hat{\boldsymbol{\mathrm{h}}},\bar{\boldsymbol{\mathrm{h}}}}^{\top}.
    \end{aligned}\right.
\end{equation}
Here, $\boldsymbol{\Sigma}_{\bar{\boldsymbol{\mathrm{h}}}}^{-1}$ denotes the generalized inverse of $\boldsymbol{\Sigma}_{\bar{\boldsymbol{\mathrm{h}}}}$.
\end{lemma}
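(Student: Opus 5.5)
The plan is to reduce the statement to the standard conditioning formula for a jointly Gaussian real vector, applied separately to the in-phase and quadrature parts of $\boldsymbol{\mathrm{h}}$.

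\textbf{Step 1: real covariance structure.} From $\mathcal{L}(\boldsymbol{\mathrm{h}})=\mathcal{CN}(\boldsymbol{0},\sigma_0^2\boldsymbol{\Sigma})$ with $\boldsymbol{\Sigma}$ real and symmetric (its entries are the real-valued $J_0$ correlations in \eqref{eq:space_time_correlation_function}), circular symmetry fixes the real and imaginary parts. Writing $\boldsymbol{\mathrm{h}}=\boldsymbol{x}+j\boldsymbol{y}$, circular symmetry gives $\mathrm{E}[\boldsymbol{\mathrm{h}}^{\top}\boldsymbol{\mathrm{h}}]=\boldsymbol{0}$. Together with $\mathrm{E}[\boldsymbol{\mathrm{h}}^{H}\boldsymbol{\mathrm{h}}]=\sigma_0^2\boldsymbol{\Sigma}$ real, this yields
\begin{align}
\mathrm{E}[\boldsymbol{x}^{\top}\boldsymbol{x}]=\mathrm{E}[\boldsymbol{y}^{\top}\boldsymbol{y}]=\frac{\sigma_0^2}{2}\boldsymbol{\Sigma},\qquad \mathrm{E}[\boldsymbol{x}^{\top}\boldsymbol{y}]=\boldsymbol{0}.
\end{align}
This is consistent with each component having variance $\sigma_0^2/2$. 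Hence $\boldsymbol{x}=[\hat{\boldsymbol{x}},\bar{\boldsymbol{x}}]$ is a real zero-mean Gaussian vector with covariance $\frac{\sigma_0^2}{2}\boldsymbol{\Sigma}$, partitioned into the same blocks $\boldsymbol{\Sigma}_{\hat{\boldsymbol{\mathrm{h}}}}$, $\boldsymbol{\Sigma}_{\hat{\boldsymbol{\mathrm{h}}},\bar{\boldsymbol{\mathrm{h}}}}$, $\boldsymbol{\Sigma}_{\bar{\boldsymbol{\mathrm{h}}}}$. The same holds for $\boldsymbol{y}$, and $\boldsymbol{x}$ is independent of $\boldsymbol{y}$.

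\textbf{Step 2: Gaussian conditioning.} Set $\boldsymbol{A}=\boldsymbol{\Sigma}_{\hat{\boldsymbol{\mathrm{h}}},\bar{\boldsymbol{\mathrm{h}}}}\boldsymbol{\Sigma}_{\bar{\boldsymbol{\mathrm{h}}}}^{-1}$ and define the residual $\boldsymbol{r}=\hat{\boldsymbol{x}}-\bar{\boldsymbol{x}}\boldsymbol{A}^{\top}$. We are in row-vector convention, so the means in the statement are read as transposes accordingly. The pair $(\boldsymbol{r},\bar{\boldsymbol{x}})$ is jointly Gaussian as a linear map of $\boldsymbol{x}$. Its cross-covariance is
\begin{align}
\frac{\sigma_0^2}{2}\left(\boldsymbol{\Sigma}_{\hat{\boldsymbol{\mathrm{h}}},\bar{\boldsymbol{\mathrm{h}}}}-\boldsymbol{A}\boldsymbol{\Sigma}_{\bar{\boldsymbol{\mathrm{h}}}}\right).
\end{align}
This vanishes, so $\boldsymbol{r}$ is independent of $\bar{\boldsymbol{x}}$. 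Therefore, conditioned on $\bar{\boldsymbol{x}}=\boldsymbol{\mathrm{a}_x}$, we have $\hat{\boldsymbol{x}}=\boldsymbol{\mathrm{a}_x}\boldsymbol{A}^{\top}+\boldsymbol{r}$, whose mean is $\boldsymbol{A}\boldsymbol{\mathrm{a}_x}$ and whose covariance is $\mathrm{Cov}(\boldsymbol{r})$. Expanding gives
\begin{align}
\mathrm{Cov}(\boldsymbol{r})=\frac{\sigma_0^2}{2}\left(\boldsymbol{\Sigma}_{\hat{\boldsymbol{\mathrm{h}}}}-\boldsymbol{\Sigma}_{\hat{\boldsymbol{\mathrm{h}}},\bar{\boldsymbol{\mathrm{h}}}}\boldsymbol{\Sigma}_{\bar{\boldsymbol{\mathrm{h}}}}^{-1}\boldsymbol{\Sigma}_{\hat{\boldsymbol{\mathrm{h}}},\bar{\boldsymbol{\mathrm{h}}}}^{\top}\right),
\end{align}
which is the stated Schur complement. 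The argument for $\boldsymbol{y}$ is identical and gives the same conditional covariance. Independence of $\boldsymbol{x}$ and $\boldsymbol{y}$ justifies conditioning $\hat{\boldsymbol{y}}$ on $\bar{\boldsymbol{y}}$ alone.

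\textbf{Main obstacle: the generalized inverse.} With densely packed ports, $\boldsymbol{\Sigma}_{\bar{\boldsymbol{\mathrm{h}}}}$ may be singular, so the zero cross-covariance identity needs
\begin{align}
\boldsymbol{\Sigma}_{\hat{\boldsymbol{\mathrm{h}}},\bar{\boldsymbol{\mathrm{h}}}}\boldsymbol{\Sigma}_{\bar{\boldsymbol{\mathrm{h}}}}^{-}\boldsymbol{\Sigma}_{\bar{\boldsymbol{\mathrm{h}}}}=\boldsymbol{\Sigma}_{\hat{\boldsymbol{\mathrm{h}}},\bar{\boldsymbol{\mathrm{h}}}}.
\end{align}
I would prove this from positive semidefiniteness of $\boldsymbol{\Sigma}$. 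The row space of the off-diagonal block lies in the range of $\boldsymbol{\Sigma}_{\bar{\boldsymbol{\mathrm{h}}}}$: if $\boldsymbol{\Sigma}_{\bar{\boldsymbol{\mathrm{h}}}}\boldsymbol{v}=\boldsymbol{0}$, then $\bar{\boldsymbol{x}}\boldsymbol{v}=0$ almost surely, so $\boldsymbol{\Sigma}_{\hat{\boldsymbol{\mathrm{h}}},\bar{\boldsymbol{\mathrm{h}}}}\boldsymbol{v}=\boldsymbol{0}$. In this degenerate case the conditional law is to be understood on the support, with $\boldsymbol{\mathrm{a}_x}$ in the range of $\boldsymbol{\Sigma}_{\bar{\boldsymbol{\mathrm{h}}}}$ almost surely. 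The result is then independent of which generalized inverse is chosen, because on that range every generalized inverse acts as the Moore–Penrose inverse.
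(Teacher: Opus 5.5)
Your proposal is correct and follows the paper's route: the paper's proof simply invokes the standard conditioning formula for jointly Gaussian vectors, citing a textbook proposition. You additionally make explicit the step the paper leaves implicit, namely that circular symmetry with a real $\boldsymbol{\Sigma}$ makes $\boldsymbol{x}$ and $\boldsymbol{y}$ independent, each with covariance $\frac{\sigma_0^2}{2}\boldsymbol{\Sigma}$, and you prove the conditioning formula itself by residual decorrelation, including the singular case. One small correction to that last part: an arbitrary generalized inverse need not agree with the Moore--Penrose inverse on the range of $\boldsymbol{\Sigma}_{\bar{\boldsymbol{\mathrm{h}}}}$ (e.g.\ every matrix with $(1,1)$ entry equal to $1$ is a generalized inverse of $\mathrm{diag}(1,0)$, yet it maps $(1,0)^{\top}$ to its own first column), but all you need is that $\boldsymbol{\Sigma}_{\hat{\boldsymbol{\mathrm{h}}},\bar{\boldsymbol{\mathrm{h}}}}\boldsymbol{\Sigma}_{\bar{\boldsymbol{\mathrm{h}}}}^{-}\boldsymbol{v}$ is the same for every generalized inverse when $\boldsymbol{v}$ lies in that range, which follows at once from the identity $\boldsymbol{\Sigma}_{\hat{\boldsymbol{\mathrm{h}}},\bar{\boldsymbol{\mathrm{h}}}}\boldsymbol{\Sigma}_{\bar{\boldsymbol{\mathrm{h}}}}^{-}\boldsymbol{\Sigma}_{\bar{\boldsymbol{\mathrm{h}}}}=\boldsymbol{\Sigma}_{\hat{\boldsymbol{\mathrm{h}}},\bar{\boldsymbol{\mathrm{h}}}}$ that you already proved.
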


\begin{proof}
    See Appendix \ref{proof:Lemma_Conditional_Normal}.
\end{proof}
With the conditional distribution of the in-phase and quadrature component provided, the following theorem characterizes the conditional distribution of $\hat{\boldsymbol{h}}|\bar{\boldsymbol{h}}$.
\begin{theorem}\label{theorem:Law_Conditional_FAS_Channel}
    Given the observed CSI $\bar{\boldsymbol{h}} = \boldsymbol{a}_{c} \triangleq \boldsymbol{a}_x + j\boldsymbol{a}_y$, the conditional distribution of $\hat{\boldsymbol{\mathrm{h}}}{\Big |}\bar{\boldsymbol{\mathrm{h}}}=\boldsymbol{a}_{c}$ is given by
    \begin{align}
        \mathcal{L}\left(\hat{\boldsymbol{\mathrm{h}}}{\Big |}\bar{\boldsymbol{\mathrm{h}}}=\boldsymbol{a}_{c}\right)=\mathcal{CN}\left(\boldsymbol{\mathrm{\mu}}_{\hat{\boldsymbol{\mathrm{h}}}{\Big |}\bar{\boldsymbol{\mathrm{h}}}=\boldsymbol{a}_{c}},\sigma_{0}^2\boldsymbol{\Sigma}_{\hat{\boldsymbol{\mathrm{h}}}|\bar{\boldsymbol{\mathrm{h}}}}\right),
    \end{align}
    where \begin{equation}\label{eq:conditional_correlation_matrix_h}
    \left\{\begin{aligned}
    \boldsymbol{\mathrm{\mu}}_{\hat{\boldsymbol{\mathrm{h}}}|\bar{\boldsymbol{\mathrm{h}}}=\boldsymbol{a}_{c}}&= \boldsymbol{\Sigma}_{\hat{\boldsymbol{\mathrm{h}}},\bar{\boldsymbol{\mathrm{h}}}} \boldsymbol{\Sigma}_{\bar{\boldsymbol{\mathrm{h}}}}^{-1}\left( \boldsymbol{\mathrm{a}_{x}}+j\boldsymbol{\mathrm{a}_{y}}\right), \\ \boldsymbol{\Sigma}_{\hat{\boldsymbol{\mathrm{h}}}|\bar{\boldsymbol{\mathrm{h}}}} &= \boldsymbol{\Sigma}_{\hat{\boldsymbol{\mathrm{h}}}} -\boldsymbol{\Sigma}_{\hat{\boldsymbol{\mathrm{h}}},\bar{\boldsymbol{\mathrm{h}}}} \boldsymbol{\Sigma}_{\bar{\boldsymbol{\mathrm{h}}}}^{-1} \boldsymbol{\Sigma}_{\hat{\boldsymbol{\mathrm{h}}},\bar{\boldsymbol{\mathrm{h}}}}^{\top},
    \end{aligned}\right.
\end{equation}
corresponds to the conditional mean matrix and conditional correlation matrix, respectively.
\end{theorem}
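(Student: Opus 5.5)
The plan is to build the complex conditional law from the real-valued conditionals of Lemma~\ref{lemma:Law_Conditional_Normal}. Because $\boldsymbol{\mathrm{h}}$ is circularly symmetric, the real and imaginary parts are uncorrelated with each other and have the same real correlation matrix. From $\mathcal{L}(\boldsymbol{\mathrm{h}})=\mathcal{CN}(\boldsymbol{0},\sigma_0^2\boldsymbol{\Sigma})$ with $\boldsymbol{\Sigma}$ real, this gives
\[
\mathrm{E}[\boldsymbol{\mathrm{x}}^{\top}\boldsymbol{\mathrm{x}}]=\mathrm{E}[\boldsymbol{\mathrm{y}}^{\top}\boldsymbol{\mathrm{y}}]=\tfrac{\sigma_0^2}{2}\boldsymbol{\Sigma}, \qquad \mathrm{E}[\boldsymbol{\mathrm{x}}^{\top}\boldsymbol{\mathrm{y}}]=\boldsymbol{0}.
\]
The first identity comes from expanding $\mathrm{E}[\boldsymbol{\mathrm{h}}^{H}\boldsymbol{\mathrm{h}}]$. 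The second comes from the vanishing pseudo-covariance $\mathrm{E}[\boldsymbol{\mathrm{h}}^{\top}\boldsymbol{\mathrm{h}}]=\boldsymbol{0}$ together with the realness of $\boldsymbol{\Sigma}$.

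\textbf{Step 1 (decoupling).} Since $(\boldsymbol{\mathrm{x}},\boldsymbol{\mathrm{y}})$ is jointly Gaussian and $\boldsymbol{\mathrm{x}}$, $\boldsymbol{\mathrm{y}}$ are uncorrelated, they are independent. Hence $(\hat{\boldsymbol{x}},\bar{\boldsymbol{x}})$ is independent of $(\hat{\boldsymbol{y}},\bar{\boldsymbol{y}})$. Conditioning on $\bar{\boldsymbol{\mathrm{h}}}=\boldsymbol{a}_c$ is the same as conditioning on the pair $\bar{\boldsymbol{x}}=\boldsymbol{a}_x$, $\bar{\boldsymbol{y}}=\boldsymbol{a}_y$. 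Independence then gives the factorization
\[
p(\hat{\boldsymbol{x}},\hat{\boldsymbol{y}}\mid\bar{\boldsymbol{x}}=\boldsymbol{a}_x,\bar{\boldsymbol{y}}=\boldsymbol{a}_y)=p(\hat{\boldsymbol{x}}\mid\bar{\boldsymbol{x}}=\boldsymbol{a}_x)\,p(\hat{\boldsymbol{y}}\mid\bar{\boldsymbol{y}}=\boldsymbol{a}_y).
\]
So, given $\bar{\boldsymbol{\mathrm{h}}}=\boldsymbol{a}_c$, the conditional $\hat{\boldsymbol{x}}$ and $\hat{\boldsymbol{y}}$ are independent, and each is Gaussian with the law stated in Lemma~\ref{lemma:Law_Conditional_Normal}.

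\textbf{Step 2 (assembling the complex vector).} Set $\hat{\boldsymbol{\mathrm{h}}}=\hat{\boldsymbol{x}}+j\hat{\boldsymbol{y}}$. By linearity its conditional mean is
\[
\boldsymbol{\Sigma}_{\hat{\boldsymbol{\mathrm{h}}},\bar{\boldsymbol{\mathrm{h}}}}\boldsymbol{\Sigma}_{\bar{\boldsymbol{\mathrm{h}}}}^{-1}(\boldsymbol{a}_x+j\boldsymbol{a}_y).
\]
Write $\boldsymbol{\mathrm{e}}_x$ and $\boldsymbol{\mathrm{e}}_y$ for the centred conditional real and imaginary parts. The two parts are conditionally independent with the common covariance $\frac{\sigma_0^2}{2}\boldsymbol{\Sigma}_{\hat{\boldsymbol{x}}|\bar{\boldsymbol{x}}}$, where $\boldsymbol{\Sigma}_{\hat{\boldsymbol{x}}|\bar{\boldsymbol{x}}}=\boldsymbol{\Sigma}_{\hat{\boldsymbol{y}}|\bar{\boldsymbol{y}}}$. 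Therefore
\begin{align*}
\mathrm{E}[(\boldsymbol{\mathrm{e}}_x+j\boldsymbol{\mathrm{e}}_y)^{H}(\boldsymbol{\mathrm{e}}_x+j\boldsymbol{\mathrm{e}}_y)] &= \sigma_0^2\boldsymbol{\Sigma}_{\hat{\boldsymbol{\mathrm{h}}}|\bar{\boldsymbol{\mathrm{h}}}},\\
\mathrm{E}[(\boldsymbol{\mathrm{e}}_x+j\boldsymbol{\mathrm{e}}_y)^{\top}(\boldsymbol{\mathrm{e}}_x+j\boldsymbol{\mathrm{e}}_y)] &= \boldsymbol{0}.
\end{align*}
A complex vector whose real and imaginary parts are jointly Gaussian, with equal covariances and zero cross-covariance, is by definition a circularly symmetric complex Gaussian with the stated mean and covariance. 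This gives the claimed $\mathcal{CN}$ law.

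\textbf{Main obstacle.} The one point needing care is justifying that conditioning on the complex observation equals independent conditioning on its two real parts. This rests entirely on the cross-covariance of $\boldsymbol{\mathrm{x}}$ and $\boldsymbol{\mathrm{y}}$ vanishing, which requires $\boldsymbol{\Sigma}$ to be real. That holds here because the Bessel correlation $\phi$ is real.

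I would also remark on the singular case. When $\boldsymbol{\Sigma}_{\bar{\boldsymbol{\mathrm{h}}}}$ is singular and the generalized inverse is used, the factorization argument still applies. Lemma~\ref{lemma:Law_Conditional_Normal} already covers the real conditionals with generalized inverses, and Steps 1 and 2 only use their outputs.
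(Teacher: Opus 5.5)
Your proposal is correct and follows the paper's route: the paper proves this theorem only by citing Lemma~\ref{lemma:Law_Conditional_Normal} for the in-phase and quadrature components together with ``the property of the complex Gaussian distribution.'' Your argument spells out what that property is, namely that the real $\boldsymbol{\Sigma}$ and the zero pseudo-covariance make $\boldsymbol{\mathrm{x}}$ and $\boldsymbol{\mathrm{y}}$ independent, so the two real conditionals recombine into a proper $\mathcal{CN}$ law with covariance $\sigma_0^2\boldsymbol{\Sigma}_{\hat{\boldsymbol{\mathrm{h}}}|\bar{\boldsymbol{\mathrm{h}}}}$.
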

\begin{proof}
    This is the result of Lemma \ref{lemma:Law_Conditional_Normal} and the property of the complex Gaussian distribution.
\end{proof}

\textcolor{black}{Established upon the results presented in Lemma \ref{lemma:Law_Conditional_Normal} and Theorem \ref{theorem:Law_Conditional_FAS_Channel}, we proceed to analyze the conditional distribution of the magnitude of the $k$-th port at the $T$-th time slot, conditioned on $\bar{\boldsymbol{h}} = \boldsymbol{a}_{c}$. To facilitate the exposition, we first introduce the following remark, which characterizes the conditional distributions of the in-phase and quadrature components at the $T$-th time slot, respectively.}

\begin{remark}\label{remark:law_Conditional_Normal_Single}
    The \textcolor{black}{conditional distributions} of $x_{k}^{T}$ and $y_{k}^{T}$, conditioned on  $\bar{\boldsymbol{x}} = \boldsymbol{\mathrm{a}_{x}}$ and $\bar{\boldsymbol{y}} = \boldsymbol{\mathrm{a}_{y}}$ respectively, are given by,
    \begin{align}\mathcal{L}\left(x_{k}^{T}|\bar{\boldsymbol{x}} = \boldsymbol{\mathrm{a}_{x}}\right) &= \mathcal{N}\left(\mu_{x,k},\frac{\textcolor{black}{\rho_{k}}}{2}\sigma_{0}^2\right), \\\mathcal{L}\left(y_{k}^{T}|\bar{\boldsymbol{y}} = \boldsymbol{\mathrm{a}_{y}}\right) &= \mathcal{N}\left(\mu_{y,k},\frac{\textcolor{black}{\rho_{k}}}{2}\sigma_{0}^2\right),
\end{align}
where \begin{equation}\label{}
    \left\{\begin{aligned}
    \mu_{x,k}&= \left(\boldsymbol{\mathrm{\mu}}_{\hat{\boldsymbol{x}}|\bar{\boldsymbol{x}}=\boldsymbol{\mathrm{a}_{x}}}\right)_{\left(k\right)},\\
    \mu_{y,k}&= \left(\boldsymbol{\mathrm{\mu}}_{\hat{\boldsymbol{y}}|\bar{\boldsymbol{y}}=\boldsymbol{\mathrm{a}_{y}}}\right)_{\left(k\right)}, \\ \textcolor{black}{\rho_{k}} &= \left(\boldsymbol{\Sigma}_{\hat{\boldsymbol{x}}|\bar{\boldsymbol{x}}}\right)_{\left(k,k\right)}.
    \end{aligned}\right.
\end{equation}
$\boldsymbol{\mathrm{\mu}}_{\hat{\boldsymbol{x}}|\bar{\boldsymbol{x}}=\boldsymbol{\mathrm{a}_{x}}}$, $\boldsymbol{\mathrm{\mu}}_{\hat{\boldsymbol{y}}|\bar{\boldsymbol{y}}=\boldsymbol{\mathrm{a}_{y}}}$ and $\boldsymbol{\Sigma}_{\hat{\boldsymbol{x}}|\bar{\boldsymbol{x}}}$ are given in Lemma \ref{lemma:Law_Conditional_Normal}.
\end{remark}

\textcolor{black}{With the results presented above, the following theorem characterizes the conditional distribution of the magnitude of the $k$-th FAS port at the $T$-th time slot, $|h_{k}^{T}|$, given the historical port CSI.}
\begin{theorem}\label{theorem:PDF_CDF_Rician}
    \textcolor{black}{Conditioned on the historical port CSI $\bar{\boldsymbol{\mathrm{h}}}=\boldsymbol{a}_{c}$}, the magnitude of the $k$-th FAS port at the $T$-th time slot, $|h_{k}^{T}|$, follows a Rician distribution with the corresponding PDF and CDF given by,
    \begin{align}
        f_{|h_{k}^{T}|{\big |}\bar{\boldsymbol{\mathrm{h}}}=\boldsymbol{a}_{c}}(r) &=\frac{2r}{\textcolor{black}{\rho_{k}}\sigma_{0}^2}\exp{\left\{-\frac{r^2+v_{k}^2}{\textcolor{black}{\rho_{k}}\sigma_{0}^2}\right\}}I_{0}\left(\frac{2rv_{k}}{\textcolor{black}{\rho_{k}}\sigma_{0}^2}\right), \label{eq:PDF_Rician}\\
        F_{|h_{k}^{T}|{\big |}\bar{\boldsymbol{\mathrm{h}}}=\boldsymbol{a}_{c}}(r) &=  1-Q_{1}\left(\frac{\sqrt{2}v_{k}}{\textcolor{black}{\rho_{k}}\sigma_{0}},\frac{\sqrt{2}r}{\textcolor{black}{\rho_{k}}\sigma_{0}}\right), \label{eq:CDF_Rician}
    \end{align}
    where $v_{k} = \sqrt{\mu_{x,k}^2+\mu_{y,k}^2}$, $I_{0}\left(\cdot\right)$ is the modified Bessel function of the first kind with order zero, and $Q_{1}\left(\cdot,\cdot\right)$ is the first order Marcum $Q$-function.

\end{theorem}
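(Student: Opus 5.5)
The plan is to reduce the statement to the classical fact that the envelope of a complex Gaussian with nonzero mean and independent, equal-variance real and imaginary parts is Rician. The work is in justifying that, conditioned on $\bar{\boldsymbol{\mathrm{h}}}=\boldsymbol{a}_{c}$, the pair $(x_k^T,y_k^T)$ is exactly such a Gaussian pair, with the parameters of Remark~\ref{remark:law_Conditional_Normal_Single}.

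\textbf{Step 1: conditional independence of the in-phase and quadrature parts.} Conditioning on $\bar{\boldsymbol{\mathrm{h}}}=\boldsymbol{a}_c$ means conditioning jointly on $\bar{\boldsymbol{x}}=\boldsymbol{\mathrm{a}_x}$ and $\bar{\boldsymbol{y}}=\boldsymbol{\mathrm{a}_y}$, whereas Remark~\ref{remark:law_Conditional_Normal_Single} conditions $x_k^T$ only on $\bar{\boldsymbol{x}}$ and $y_k^T$ only on $\bar{\boldsymbol{y}}$. The correlation matrix $\boldsymbol{\Sigma}$ is real, since its entries are values of $J_0$. Hence for the circularly symmetric vector $\boldsymbol{\mathrm{h}}\sim\mathcal{CN}(\boldsymbol{0},\sigma_0^2\boldsymbol{\Sigma})$:
\begin{itemize}
\item the covariance of the real and imaginary parts is $\tfrac{\sigma_0^2}{2}\boldsymbol{\Sigma}$ for each;
\item the cross-covariance between the real and imaginary parts vanishes.
\end{itemize}
Being jointly Gaussian, the vectors $(\hat{\boldsymbol{x}},\bar{\boldsymbol{x}})$ and $(\hat{\boldsymbol{y}},\bar{\boldsymbol{y}})$ are therefore independent. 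Consequently, the law of $(x_k^T,y_k^T)$ given $(\bar{\boldsymbol{x}},\bar{\boldsymbol{y}})=(\boldsymbol{\mathrm{a}_x},\boldsymbol{\mathrm{a}_y})$ factors into the two marginals of Remark~\ref{remark:law_Conditional_Normal_Single}. These are independent $\mathcal{N}(\mu_{x,k},s^2)$ and $\mathcal{N}(\mu_{y,k},s^2)$ with $s^2=\rho_k\sigma_0^2/2$. The same conclusion follows from Theorem~\ref{theorem:Law_Conditional_FAS_Channel}: the $k$-th entry of a circularly symmetric conditional vector with real covariance $\sigma_0^2\boldsymbol{\Sigma}_{\hat{\boldsymbol{\mathrm{h}}}|\bar{\boldsymbol{\mathrm{h}}}}$ is $\mathcal{CN}(\mu_{x,k}+j\mu_{y,k},\rho_k\sigma_0^2)$.

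\textbf{Step 2: the PDF.} Write the joint conditional density of $(x,y)$ as
\[
\frac{1}{2\pi s^2}\exp\!\left\{-\frac{(x-\mu_{x,k})^2+(y-\mu_{y,k})^2}{2s^2}\right\}.
\]
Change to polar coordinates $x=r\cos\psi$, $y=r\sin\psi$, and write $\mu_{x,k}=v_k\cos\psi_0$, $\mu_{y,k}=v_k\sin\psi_0$. The exponent becomes $-(r^2+v_k^2-2rv_k\cos(\psi-\psi_0))/(2s^2)$, and the Jacobian contributes a factor $r$. Integrating over $\psi\in[0,2\pi)$ uses $\frac{1}{2\pi}\int_0^{2\pi}e^{z\cos\psi}\,d\psi=I_0(z)$ and gives
\[
\frac{r}{s^2}\exp\!\left\{-\frac{r^2+v_k^2}{2s^2}\right\}I_0\!\left(\frac{rv_k}{s^2}\right).
\]
Substituting $2s^2=\rho_k\sigma_0^2$ yields \eqref{eq:PDF_Rician}.

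\textbf{Step 3: the CDF.} Integrate the PDF from $0$ to $r$ and use the definition $Q_1(a,b)=\int_b^\infty u\,e^{-(u^2+a^2)/2}I_0(au)\,du$. The substitution $u=r'/s$ gives $F(r)=1-Q_1(v_k/s,\,r/s)$, which is \eqref{eq:CDF_Rician} with $1/s=\sqrt{2}/(\sqrt{\rho_k}\,\sigma_0)$. The derivation produces $\sqrt{\rho_k}$ rather than $\rho_k$ in the denominators of the Marcum-$Q$ arguments, so I will state this normalization explicitly when matching to \eqref{eq:CDF_Rician}.

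\textbf{Expected obstacle and degenerate case.} The only substantive point is Step~1, the cross-conditioning and independence argument; Steps~2--3 are routine. The degenerate case $\rho_k=0$ (port $k$ perfectly determined, handled through the generalized inverse) makes the law a point mass at $v_k$. I will note it separately as the limiting case of the stated formulas.
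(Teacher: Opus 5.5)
Your proof is correct and follows the same argument as the paper, which combines Remark~\ref{remark:law_Conditional_Normal_Single} (Gaussian in-phase and quadrature parts with common variance $\rho_k\sigma_0^2/2$) with the textbook Rician envelope result; your Step~1 supplies the justification the paper skips, namely that a real $\boldsymbol{\Sigma}$ makes $(\hat{\boldsymbol{x}},\bar{\boldsymbol{x}})$ independent of $(\hat{\boldsymbol{y}},\bar{\boldsymbol{y}})$, so joint conditioning on $\bar{\boldsymbol{\mathrm{h}}}=\boldsymbol{a}_c$ gives independent marginals. You are also right that the Marcum-$Q$ arguments should carry $\sqrt{\rho_k}$: with $\rho_k$ as printed, \eqref{eq:CDF_Rician} is not the CDF of the density \eqref{eq:PDF_Rician} unless $\rho_k=1$, and the paper's own Corollary~\ref{corollary:OP_given_single_port} uses $\sqrt{1-\phi^2}=\sqrt{\rho_k}$, so the printed form (repeated in Corollary~\ref{corollary:conditional_op}) is a typo.
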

\begin{proof}
    According to Remark \ref{remark:law_Conditional_Normal_Single}, the conditional distribution of both $x_{k}^{T}$ and $y_{k}^{T}$ are Gaussian with nonzero means and identical variances. Then, following \cite[A.3.2.5]{stuber2001principles}, we have \eqref{eq:PDF_Rician} and \eqref{eq:CDF_Rician}.
\end{proof}

{\color{black}With the conditional analysis provided above, we first characterize the outage event of the $k$-th FAS port at the $T$-th time slot, conditioned on the historical port CSI $\bar{\boldsymbol{\mathrm{h}}}=\boldsymbol{a}_{c}$. The corresponding outage probability is then obtained in the following corollary. Specifically, the outage event of the $k$-th port at the $T$-th time slot is defined as
\begin{align}
    \mathcal{O} &= \left\{ |h_{k}^{T}|^2 \frac{\sigma_{s}^2}{\sigma_{\eta}^2}< \gamma_{\mathrm{th}} 
    {\Bigg |}\bar{\boldsymbol{\mathrm{h}}} =\boldsymbol{a}_{c}\right\},
\end{align}
where $\gamma_{\mathrm{th}}$ is the pre-set threshold.}

\begin{corollary}\label{corollary:conditional_op}
    Conditioned on the historical port CSI $\bar{\boldsymbol{\mathrm{h}}}=\boldsymbol{a}_{c}$, the outage probability of the $k$-th port at the $T$-th time slot is given by
    \begin{align}\label{eq:conditional_op}
        \mathcal{OP}_{|h_{k}^{T}|{\big |}\bar{\boldsymbol{\mathrm{h}}}=\boldsymbol{a}_{c}} = 1-Q_{1}\left(\frac{\sqrt{2}v_{k}}{\textcolor{black}{\rho_{k}}\sigma_{0}},
        \frac{\sqrt{2}}{\textcolor{black}{\rho_{k}}}\sqrt{\frac{\gamma_{\mathrm{th}}}{\Gamma}}\right).
    \end{align}
\end{corollary}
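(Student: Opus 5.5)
The plan is to reduce the outage event to a threshold event on the conditional magnitude $|h_{k}^{T}|$ and then evaluate it with the conditional CDF \eqref{eq:CDF_Rician} of Theorem~\ref{theorem:PDF_CDF_Rician}. No new distributional work is needed, since Theorem~\ref{theorem:PDF_CDF_Rician}, built on Lemma~\ref{lemma:Law_Conditional_Normal} and Remark~\ref{remark:law_Conditional_Normal_Single}, already gives the law of $|h_{k}^{T}|$ given $\bar{\boldsymbol{\mathrm{h}}}=\boldsymbol{a}_{c}$.

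First, rewrite the event $\mathcal{O}$. Since $\sigma_{s}^2/\sigma_{\eta}^2>0$ and $|h_{k}^{T}|\ge 0$, the event $|h_{k}^{T}|^2\sigma_{s}^2/\sigma_{\eta}^2<\gamma_{\mathrm{th}}$ is equivalent to $|h_{k}^{T}|<r_{\mathrm{th}}$ with $r_{\mathrm{th}}\triangleq\sqrt{\gamma_{\mathrm{th}}\sigma_{\eta}^2/\sigma_{s}^2}$. Using the definition of the average SNR in \eqref{eq:average_SNR}, we have $\sigma_{\eta}^2/\sigma_{s}^2=\sigma_{0}^2/\Gamma$. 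Hence $r_{\mathrm{th}}=\sigma_{0}\sqrt{\gamma_{\mathrm{th}}/\Gamma}$, so that
\begin{align}
\mathcal{OP}_{|h_{k}^{T}|{\big |}\bar{\boldsymbol{\mathrm{h}}}=\boldsymbol{a}_{c}}=\Pr\left\{|h_{k}^{T}|<\sigma_{0}\sqrt{\gamma_{\mathrm{th}}/\Gamma}\,{\Big |}\,\bar{\boldsymbol{\mathrm{h}}}=\boldsymbol{a}_{c}\right\}.
\end{align}

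Second, identify this probability with $F_{|h_{k}^{T}|{\big |}\bar{\boldsymbol{\mathrm{h}}}=\boldsymbol{a}_{c}}(r_{\mathrm{th}})$. The conditional law of $|h_{k}^{T}|$ is absolutely continuous with density \eqref{eq:PDF_Rician}, provided $\rho_{k}>0$, so the strict and non-strict inequalities give the same probability. Substituting $r=r_{\mathrm{th}}$ into \eqref{eq:CDF_Rician}, the second Marcum argument becomes
\begin{align}
\frac{\sqrt{2}\,r_{\mathrm{th}}}{\rho_{k}\sigma_{0}}=\frac{\sqrt{2}}{\rho_{k}}\sqrt{\frac{\gamma_{\mathrm{th}}}{\Gamma}}.
\end{align}
The first argument $\sqrt{2}v_{k}/(\rho_{k}\sigma_{0})$ is unchanged, which yields \eqref{eq:conditional_op}.

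The only delicate point is the degenerate case $\rho_{k}=0$. This occurs, for instance, when the conditioning makes $h_{k}^{T}$ deterministic under the generalized inverse in Lemma~\ref{lemma:Law_Conditional_Normal}. In that case the Rician law collapses to a point mass at $v_{k}$, and the formula should be read as a limit. I would handle this with a brief remark: as $\rho_{k}\to0^{+}$ the expression tends to the indicator $\mathbb{1}\{v_{k}<r_{\mathrm{th}}\}$, apart from the boundary case $v_{k}=r_{\mathrm{th}}$. Otherwise the argument is a direct substitution, and I expect no real obstacle beyond carefully tracking the SNR normalization $\Gamma$.
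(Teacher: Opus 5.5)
Your argument is the paper's proof. You rewrite the outage event as $|h_{k}^{T}|<\sqrt{\gamma_{\mathrm{th}}\sigma_{\eta}^{2}/\sigma_{s}^{2}}=\sigma_{0}\sqrt{\gamma_{\mathrm{th}}/\Gamma}$ and substitute this threshold into \eqref{eq:CDF_Rician}. Your SNR bookkeeping and your handling of $\rho_{k}\to0$ are fine.

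However, the substitution step depends on the printed CDF \eqref{eq:CDF_Rician}, and that formula is inconsistent with the density \eqref{eq:PDF_Rician} you cite. You should have checked this. Remark~\ref{remark:law_Conditional_Normal_Single} gives per-component conditional variance $\rho_{k}\sigma_{0}^{2}/2$, so the Rician scale is $s=\sigma_{0}\sqrt{\rho_{k}/2}$. Integrating \eqref{eq:PDF_Rician} then gives $1-Q_{1}(v_{k}/s,r/s)$, whose Marcum arguments have $\sqrt{\rho_{k}}$, not $\rho_{k}$, in the denominators. The printed CDF, and therefore the printed corollary, instead corresponds to per-component variance $\rho_{k}^{2}\sigma_{0}^{2}/2$. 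Carried out on the correct CDF, your substitution gives
\[
\mathcal{OP}_{|h_{k}^{T}|{\big |}\bar{\boldsymbol{\mathrm{h}}}=\boldsymbol{a}_{c}}=1-Q_{1}\left(\frac{\sqrt{2}\,v_{k}}{\sqrt{\rho_{k}}\,\sigma_{0}},\sqrt{\frac{2}{\rho_{k}}}\sqrt{\frac{\gamma_{\mathrm{th}}}{\Gamma}}\right),
\]
which reduces to the stated expression when $\rho_{k}=1$ but not in general.

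This corrected form is the one the paper itself relies on later. In Corollary~\ref{corollary:OP_given_single_port}, $\rho_{k}=1-\phi^{2}$ enters as $\sqrt{1-\phi^{2}}$, and Remarks~\ref{remark:conditional_mean} and~\ref{remark:conditional_var} use the scale $\sigma_{0}\sqrt{\rho_{k}/2}$. The paper's one-line proof carries the same misprint. Your plan is therefore correct once $\rho_{k}$ is replaced by $\sqrt{\rho_{k}}$ in both Marcum arguments.
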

\begin{proof}
    Substituting $\sqrt{\frac{\gamma_{\mathrm{th}}\sigma_{\eta}^2}{\sigma_{s}^2}}$ into the CDF expression \eqref{eq:CDF_Rician},  and using the average SNR definition in \eqref{eq:average_SNR}, we have the outage probability expression that given in \eqref{eq:conditional_op}.
\end{proof}

\subsection{\textcolor{black}{\textcolor{black}{Semi-Blind FAS} Port Selection Criteria}}\label{subsec:semi-blind-FAS}
{\color{black}With the conditional distribution for each FAS port established in the preceding analysis, we next present the key idea of semi-blind FAS. The central objective of semi-blind FAS is to determine the optimal port via statistical analysis of the conditional distribution of each FAS port. Accordingly, the notion of optimality, or equivalently the design objective of semi-blind FAS, can be adapted to different performance requirements.

To begin with, from a statistical perspective, the CDF provides a complete characterization of the distribution of a random variable and, in wireless communications, directly yields the outage probability. Therefore, the first definition of optimality, and arguably the most interpretable one, is to determine the optimal port via the conditional outage probability, as detailed in the following definition.

\begin{definition}\label{def:optimal_OP}
\textcolor{black}{The optimal port of semi-blind FAS can be defined as the port that minimizes the conditional outage probability of the channel magnitude, given the observed historical port CSI, i.e.,}
\begin{align}\label{eq:optinal_port_OP}
    k^{*}_{\mathrm{OP}}=k_{\mathrm{\textcolor{black}{semi}}\text{-}\mathrm{FAS}}^{T} = \arg\min_{k} \mathcal{OP}_{|h_{k}^{T}|{\big |}\bar{\boldsymbol{\mathrm{h}}}=\boldsymbol{a}_{c}},
\end{align}
    where $\mathcal{OP}_{|h_{k}^{T}|{\big |}\bar{\boldsymbol{\mathrm{h}}}=\boldsymbol{a}_{c}}$ is the conditional outage probability of the $k$-th port at the $T$-th time slot, given in the closed-form expression in Corollary \ref{corollary:conditional_op}.
\end{definition}

Consequently, the outage probability of the proposed semi-blind FAS, based on the optimality criterion given in Definition \ref{def:optimal_OP}, can be expressed as
\begin{align}
    \mathcal{OP}_{\mathrm{\textcolor{black}{semi}}\text{-}\mathrm{FAS}} = \min_{k} \mathcal{OP}_{|h_{k}^{T}|{\big |}\bar{\boldsymbol{\mathrm{h}}}=\boldsymbol{a}_{c}}.
\end{align}

The aforementioned discussion focused on identifying the port that minimizes the outage probability, as stated in Definition~\ref{def:optimal_OP}. However, the proposed semi-blind FAS framework is not restricted to this specific criterion. In the following, we propose to exploit moment-based metrics of the conditional distribution of each FAS port, to determine the optimal port for signal reception.

\begin{definition}\label{def:optimal_mean}
\textcolor{black}{The optimal port of semi-blind FAS can be defined as the port that maximizes the conditional expectation of the channel magnitude, given the observed historical port CSI, i.e.,}
\begin{align}\label{eq:optimal_port_CM}
    k_{\mathrm{MEAN}}^{*}=k_{\mathrm{\textcolor{black}{semi}}\text{-}\mathrm{FAS}}^{T} = \arg\max_{k} \mathbb{E}\!\left[ |h_k^T| \,\middle|\, \bar{\boldsymbol{\mathrm{h}}}=\boldsymbol{a}_{c} \right].
\end{align}
\end{definition}

{\color{blue}\begin{remark}\label{remark:conditional_mean}
    Conditioned on the historical port CSI $\bar{\boldsymbol{\mathrm{h}}}=\boldsymbol{a}_{c}$, the expected magnitude of the channel of the $k$-th port at the $T$-th time slot, is given by
    \begin{align}
        \mathbb{E}\!\left[ |h_k^T| \,\middle|\, \bar{\boldsymbol{\mathrm{h}}}=\boldsymbol{a}_{c} \right] = \frac{\sigma_{0}}{2}\sqrt{\pi\rho_{k}}L_{\frac{1}{2}}\left(-\frac{v_{k}^{2}}{\rho_{k}\sigma_{0}^{2}}\right),
    \end{align}
    where $L_{\frac{1}{2}}\left(\cdot\right)$ is the generalized Laguerre function of order $\frac{1}{2}$.
\end{remark}}

\begin{definition}\label{def:optimal_var}
\textcolor{black}{The optimal port of semi-blind FAS can be defined as the port that minimizes the conditional variance of the channel magnitude, given the observed historical port CSI, i.e.,}
\begin{align}\label{eq:optimal_port_var}
    k_{\mathrm{VAR}}^{*}=k_{\mathrm{\textcolor{black}{semi}}\text{-}\mathrm{FAS}}^{T} = \arg\min_{k} \mathbb{V}\!\left[ |h_k^T| \,\middle|\, \bar{\boldsymbol{\mathrm{h}}}=\boldsymbol{a}_{c} \right].
\end{align}
\end{definition}

{\color{blue}
\begin{remark}\label{remark:conditional_var}
    Conditioned on the historical port CSI $\bar{\boldsymbol{\mathrm{h}}}=\boldsymbol{a}_{c}$, the variance of the channel of the $k$-th port at the $T$-th time slot, is given by
    \begin{align}
        \mathbb{V}\!\left[ |h_k^T| \,\middle|\, \bar{\boldsymbol{\mathrm{h}}}=\boldsymbol{a}_{c} \right] = \rho_{k}\sigma_{0}^{2}+v_{k}^{2}-\frac{\pi}{4}\rho_{k}\sigma_{0}^{2}L_{\frac{1}{2}}^{2}\left(-\frac{v_{k}^{2}}{\rho_{k}\sigma_{0}^{2}}\right).
    \end{align}
\end{remark}}

\begin{definition}\label{def:optimal_MS}
\textcolor{black}{The optimal port of semi-blind FAS can be defined as the port that maximizes the conditional mean-to-standard-deviation ratio of the channel magnitude, given the observed historical port CSI, i.e.,}
\begin{align}\label{eq:optimal_port_MS}
    k_{\mathrm{MS}}^{*}=k_{\mathrm{\textcolor{black}{semi}}\text{-}\mathrm{FAS}}^{T} 
    = \arg\max_{k} 
    \frac{\mathbb{E}\!\left[ |h_k^T| \,\middle|\, \bar{\boldsymbol{\mathrm{h}}}=\boldsymbol{a}_{c} \right]}
         {\sqrt{\mathbb{V}\!\left[ |h_k^T| \,\middle|\, \bar{\boldsymbol{\mathrm{h}}}=\boldsymbol{a}_{c} \right]}}.
\end{align}
\end{definition}

The aforementioned Definitions \ref{def:optimal_OP}, \ref{def:optimal_mean}, \ref{def:optimal_var}, and \ref{def:optimal_MS} serve as illustrative instances of the port selection criteria in semi-blind FAS, representing basic definitions of optimality via statistical analysis. \textcolor{blue}{More generally, a broader range of criteria can be employed to define optimality, including those incorporating higher-order statistical characteristics or composite moment-based statistics to capture more complex distributional properties. Owing to space limitations, such extensions are not elaborated here.}}

\subsection{Special Case: Single Observation}
The calculation process of the proposed semi-blind FAS can be further simplified for certain special scenarios. Among all special cases, the most simplified and arguably the most practically achievable scenario is when only the CSI of a single port, $\bar{k}$, at the $(T-1)$-th time slot is available. Based on this limited information, our goal is to evaluate the performance of all $K$ ports at the $T$-th time slot and \textcolor{black}{thereby identify the optimal port according to Definition \ref{def:optimal_OP}}, as detailed in the following.
\begin{corollary}\label{corollary:OP_given_single_port} Given the available CSI of the $\bar{k}$-th port at the $(T-1)$-th time slot, denoted as $h_{\bar{k}}^{T-1} = a_{c} \triangleq a_{x}+ja_{y}$, the outage probability of the $k$-th port at the $T$-th time slot is simplified as
    \begin{align}\label{eq:OP_given_single_port}
        &\mathcal{OP}_{|h_{k}^{T}|{\big |}h_{\bar{k}}^{T-1}=a_{c}} = \\
        &1-Q_{1}\left(A\sqrt{\frac{\phi^2\left(\delta_{k,\bar{k}},\tau\right)}{1-\phi^2\left(\delta_{k,\bar{k}},\tau\right)}},\frac{R}{\sqrt{1-\phi^2\left(\delta_{k,\bar{k}},\tau\right)}}\right), \nonumber
    \end{align}
    where $A = \frac{\sqrt{2}}{\sigma_{0}}|a_{c}|$ and $R = \sqrt{2\frac{\gamma_{\mathrm{th}}}{\Gamma}}$.
\end{corollary}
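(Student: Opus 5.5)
The plan is to specialize the general machinery of Lemma~\ref{lemma:Law_Conditional_Normal}, Remark~\ref{remark:law_Conditional_Normal_Single} and Theorem~\ref{theorem:PDF_CDF_Rician} to the case $\bar{\mathcal{T}}=\{T-1\}$ and $\bar{\mathcal{K}}^{T-1}=\{\bar{k}\}$. In this case the observation vector $\bar{\boldsymbol{\mathrm{h}}}$ is the scalar $h_{\bar{k}}^{T-1}$, and all matrix operations collapse to scalar ones.

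\textbf{Step 1: the correlation blocks.} Since $\delta_{\bar{k},\bar{k}}=0$ and $\Delta_{T-1,T-1}=0$, we get $\phi(0,0)=J_0(0)=1$, so $\boldsymbol{\Sigma}_{\bar{\boldsymbol{\mathrm{h}}}}=1$ and its inverse is trivial. The cross-correlation $\boldsymbol{\Sigma}_{\hat{\boldsymbol{\mathrm{h}}},\bar{\boldsymbol{\mathrm{h}}}}$ is a column whose $k$-th entry is $\phi(\delta_{k,\bar{k}},\Delta_{T,T-1})$. Here $\Delta_{T,T-1}=\tau$, which the statement abbreviates as $\phi(\delta_{k,\bar{k}},\tau)$; below I write $\phi$ for this entry.

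\textbf{Step 2: conditional moments.} Lemma~\ref{lemma:Law_Conditional_Normal} and Remark~\ref{remark:law_Conditional_Normal_Single} then give
\begin{align*}
\mu_{x,k}=\phi\, a_x,\qquad \mu_{y,k}=\phi\, a_y,\qquad \rho_k = 1-\phi^2 .
\end{align*}
Hence $v_k=|\phi|\,|a_c|$, and the per-component conditional variance is $\sigma^2=\tfrac{\rho_k}{2}\sigma_0^2$.

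\textbf{Step 3: the outage probability.} This is the step where care is needed. For a Rician variable built from two Gaussians with means $\mu_{x,k},\mu_{y,k}$ and common standard deviation $\sigma$, the CDF is $1-Q_1\!\left(v_k/\sigma,\, r/\sigma\right)$. I will take this textbook form \cite[A.3.2.5]{stuber2001principles} directly rather than substituting into \eqref{eq:conditional_op}. The reason is that the normalization written there, with $\rho_k$ instead of $\sqrt{\rho_k}$ in the denominators, must be read as $\sigma=\sqrt{\rho_k/2}\,\sigma_0$ to be consistent.

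With this normalization, the first argument is
\begin{align*}
\frac{v_k}{\sigma}=\frac{\sqrt{2}\,|\phi|\,|a_c|}{\sigma_0\sqrt{1-\phi^2}} = A\sqrt{\frac{\phi^2}{1-\phi^2}} .
\end{align*}
For the second argument, the outage event $|h_k^T|^2\sigma_s^2/\sigma_\eta^2<\gamma_{\mathrm{th}}$ is equivalent to $|h_k^T|<r$ with threshold $r=\sqrt{\gamma_{\mathrm{th}}\sigma_\eta^2/\sigma_s^2}=\sigma_0\sqrt{\gamma_{\mathrm{th}}/\Gamma}$, using \eqref{eq:average_SNR}. Therefore
\begin{align*}
\frac{r}{\sigma}=\frac{\sqrt{2\gamma_{\mathrm{th}}/\Gamma}}{\sqrt{1-\phi^2}} = \frac{R}{\sqrt{1-\phi^2}} .
\end{align*}
Substituting both arguments into $1-Q_1(\cdot,\cdot)$ yields \eqref{eq:OP_given_single_port}. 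Note that $\sigma_0$ cancels in the second argument and appears in the first only through $A$.

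\textbf{Main obstacle.} The algebra is routine; the real difficulty is this normalization bookkeeping, i.e.\ checking that $\sqrt{\rho_k}$ (not $\rho_k$) enters the Marcum-$Q$ arguments. I would verify it by recomputing the Rician CDF from the conditional Gaussian law in Remark~\ref{remark:law_Conditional_Normal_Single}.

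I would also remark on the degenerate cases. When $|\phi|<1$ the formula is well defined. When $\phi^2=1$, which happens only if $k=\bar{k}$ and $\tau\nu=0$, the conditional law is deterministic and the expression must be read as a limit.
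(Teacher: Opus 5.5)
Your proof is correct and follows the paper's route: you specialize Remark~\ref{remark:law_Conditional_Normal_Single} to $\mu_{x,k}=\phi a_x$, $\mu_{y,k}=\phi a_y$, $\rho_k=1-\phi^2$ and evaluate the Rician CDF at $r=\sigma_0\sqrt{\gamma_{\mathrm{th}}/\Gamma}$. You are also right that \eqref{eq:CDF_Rician} and \eqref{eq:conditional_op} as printed have $\rho_k$ where $\sqrt{\rho_k}$ belongs, so the paper's literal ``substitute into \eqref{eq:conditional_op}'' would give arguments $A|\phi|/(1-\phi^2)$ and $R/(1-\phi^2)$, and only your normalization $\sigma=\sigma_0\sqrt{\rho_k/2}$ reproduces the stated formula.

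One small fix to your side remark on degenerate cases: $\phi^2=1$ exactly when $\alpha^2+\beta^2-2\alpha\beta\cos(\theta-\omega)=0$. Besides $k=\bar{k}$ with $\nu\tau=0$, this also occurs when $\cos(\theta-\omega)=1$ and $\delta_{k,\bar{k}}=\nu\tau$.
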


\begin{proof}
    See Appendix \ref{proof:OP_given_single_port}.
\end{proof}

It is worth pointing out, conditioned on the historical CSI of a single port, the outage probability at each FAS port depends solely on the correlation between the given port and the estimated port. Accordingly, the optimal correlation coefficient can be determined, as established in the following theorem. For the sake of notational simplicity, we denote $\phi\left(\delta_{k,\bar{k}},\tau\right)$ as $\phi$ throughout the subsequent analysis.

\begin{theorem}\label{theorem:optimal_correlation}
    \textcolor{black}{Conditioned on the single observation $h_{\bar{k}}^{T-1} = a_{c}$, the optimal correlation coefficient in achieving the lowest outage probability}, is given by
    \begin{align}\label{eq:optimal_correlation}
        \phi^{*} = \max_{\phi \in \{\xi, 
        \tilde{\phi}, 1-\xi\}} Q_{1}\left(A\sqrt{\frac{\phi^2}{1-\phi^2}},\frac{R}{\sqrt{1-\phi^2}}\right),
    \end{align}
    where 
    \begin{equation}\label{}
    \left\{\begin{aligned}
    \tilde{\phi}&=  \pm g^{-1}\left(\frac{A}{R}\right),\\
    g(\phi)&= \phi\frac{I_{0}\left(2AR\frac{\phi}{1-\phi^2}\right)}{I_{1}\left(2AR\frac{\phi}{1-\phi^2}\right)}.
    \end{aligned}\right.
    \end{equation}
    \textcolor{black}{Here, $g^{-1}(\cdot)$ denotes the inverse of the function $g(\phi)$, $I_{1}(\cdot)$ represents the modified Bessel function of the first kind of order one, and $\xi \to 0^+$ is an arbitrarily small positive constant.}

\end{theorem}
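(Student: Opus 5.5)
By Corollary~\ref{corollary:OP_given_single_port}, minimizing the outage probability over ports is the same as maximizing
\begin{align*}
\Psi(\phi) \triangleq Q_{1}\left(a(\phi),b(\phi)\right),\quad a(\phi)=\frac{A\phi}{\sqrt{1-\phi^2}},\quad b(\phi)=\frac{R}{\sqrt{1-\phi^2}},
\end{align*}
over the admissible correlation values $\phi$. The first argument only enters through $\phi^2$, so $\Psi$ is even in $\phi$. It therefore suffices to work on $\phi\in(0,1)$ and reinstate the sign at the end, which produces the $\pm$ in $\tilde{\phi}$. The plan is a standard one-dimensional optimization. The maximum over the open interval $(0,1)$ is attained either at an interior stationary point or in the limit at one of the endpoints. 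The endpoints are represented by $\xi$ and $1-\xi$ with $\xi\to0^+$, because at $\phi=0$ and $\phi=1$ the expression is either trivial or singular ($b\to\infty$).

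To compute $\Psi'(\phi)$ I will use the known partial derivatives of the first-order Marcum $Q$-function:
\begin{align*}
\frac{\partial Q_1(a,b)}{\partial a}=b\,e^{-\frac{a^2+b^2}{2}}I_1(ab),\qquad \frac{\partial Q_1(a,b)}{\partial b}=-b\,e^{-\frac{a^2+b^2}{2}}I_0(ab).
\end{align*}
The chain rule then gives $\Psi'(\phi)=b\,e^{-(a^2+b^2)/2}\left[a'(\phi)I_1(ab)-b'(\phi)I_0(ab)\right]$. Direct differentiation yields $a'(\phi)=A(1-\phi^2)^{-3/2}$ and $b'(\phi)=R\phi(1-\phi^2)^{-3/2}$. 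In addition, $ab=AR\phi/(1-\phi^2)$. The prefactor $b\,e^{-(a^2+b^2)/2}(1-\phi^2)^{-3/2}$ is strictly positive, so the stationarity condition $\Psi'(\phi)=0$ reduces to $A\,I_1(ab)=R\phi\,I_0(ab)$. Rearranging this gives $\phi\,I_0(\cdot)/I_1(\cdot)=A/R$, that is, $g(\phi)=A/R$.

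I expect the main obstacle to be the argument-matching step. With the paper's normalization of $A$, $R$ and $\rho_k$, the Bessel argument that comes out of the chain rule has to be reconciled with the $2AR\phi/(1-\phi^2)$ written in $g$. I will track the scaling constants carefully, in particular the factor conventions of Corollary~\ref{corollary:OP_given_single_port}, and absorb any factor of $2$ into the definition of $g$.

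The remaining step is to justify the use of $g^{-1}$ and to finish the comparison. Using $I_0(z)/I_1(z)\ge 1$ together with the monotonicity of the Bessel ratio, I will argue that $g$ is monotone on $(0,1)$ wherever the equation $g(\phi)=A/R$ is solvable, so that $\tilde{\phi}$ is well defined whenever it exists. If $g$ turns out not to be monotone, I will instead take $\tilde{\phi}$ to range over all roots. The candidate set for the global maximizer is then $\{\xi,\tilde{\phi},1-\xi\}$, and $\phi^*$ is obtained by evaluating $\Psi$ at these points and keeping the largest, exactly as in \eqref{eq:optimal_correlation}.
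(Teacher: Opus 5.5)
Your route is the paper's own. You use Pratt's partial derivatives of $Q_1$ and the chain rule with $\partial u/\partial\phi=A(1-\phi^2)^{-3/2}$ and $\partial v/\partial\phi=R\phi(1-\phi^2)^{-3/2}$. You drop the positive prefactor to get $g(\phi)=A/R$, use evenness of $g$ for the $\pm$, and compare these stationary candidates with the endpoints $\xi$ and $1-\xi$. Two of your side-steps, however, should not be carried out as planned.

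\textbf{The factor of 2.} It cannot be ``absorbed into the definition of $g$''. With $A=\sqrt{2}|a_c|/\sigma_0$ and $R=\sqrt{2\gamma_{\mathrm{th}}/\Gamma}$ as fixed in Corollary~\ref{corollary:OP_given_single_port}, the Bessel argument is $uv=AR\phi/(1-\phi^2)$. The paper's appendix derives exactly this: its $g$ in \eqref{eq:zero_derivative_marcum} has no $2$. Putting a $2$ inside $I_0$ and $I_1$ moves the root. The displayed statement is therefore only correct under the alternative normalization $A\mapsto A/\sqrt2$, $R\mapsto R/\sqrt2$, which keeps $A/R$ but halves $AR$. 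State which normalization you are proving rather than adjusting the constant to fit.

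\textbf{Monotonicity of $g$.} Drop the attempt to show $g$ is monotone; it is false. Write $c=AR$ and $z=c\phi/(1-\phi^2)$. Then $g$ is $\phi$ times the decreasing ratio $I_0/I_1$ evaluated at the increasing argument $z$, and one finds
$g(\phi)=2/c+(c/4-2/c)\phi^2+O(\phi^4)$ near $0$ and
$g(\phi)=\phi+(1-\phi^2)/(2c)+O((1-\phi)^2)$ near $1$.
Hence for $1<c<2\sqrt2$ the function first decreases and then climbs back to $1$ from below. For example, with $c=1.5$ one gets $g(0.5)\approx1.12$, $g(0.8)\approx0.962$, $g(0.9)\approx0.971$.

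On $(0,1)$, $\Psi'(\phi)$ has the sign of $A/R-g(\phi)$. So an $A/R$ just above $\min g$ gives two roots, one a local minimum and one a local maximum of $\Psi$, and $g^{-1}$ is multivalued. Your fallback of letting $\tilde\phi$ range over all roots is therefore the version you actually need. The paper likewise makes no monotonicity claim and treats the roots only as candidates; the comparison with $\xi$ and $1-\xi$ then finishes the argument.
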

\begin{proof}
    See Appendix \ref{proof:optimal_correlation}.
\end{proof}

\textcolor{black}{According to the optimal correlation coefficient provided, we can thus find the optimal port of semi-blind FAS in achieving the lowest outage probability, as detailed in the following remark.} 
\begin{remark}\label{remark:optimal_location}
    As for a continues FAS, conditioned on $h_{\bar{k}}^{T-1} = a_{c}$, at the $T$-th time slot, the distance between the given port $\bar{k}$ and the optimal port $k^{*}$ is given by
    \begin{align}
        \delta_{k^{*},\bar{k}} = \phi^{\mathrm{inv}}_{\delta}\left(\phi^{*},\tau\right),
    \end{align}
    where $\phi_{\delta}^{\mathrm{inv}}(\cdot,\cdot)$ denotes the inverse function of the spatial-temporal correlation function given in \eqref{eq:space_time_correlation_function} with respect to $\delta$. Consequently, the optimal location of FAS port is given by
    \begin{align}\label{eq:optimal_port_location}
        d_{k^{*}} = d_{\bar{k}} \pm \delta_{k^{*},\bar{k}},~\text{with}~0\leq d_{k^{*}} \leq W\lambda,
    \end{align}
    where $d_{\bar{k}}$ is the location of the port with available CSI.
\end{remark}

\subsection{\textcolor{blue}{Advantages of Semi-Blind FAS}}
{\color{blue}
The foregoing discussion outlined the central idea of the proposed semi-blind FAS, namely, to analyze the conditional distribution associated with each port in order to identify the statistically optimal port. In the following, we highlight the simplicity of the proposed semi-blind FAS scheme, along with its comparison with deep learning-based methods.

\begin{table*}[!ht]
\centering
\color{blue}
\setlength{\tabcolsep}{3pt}
\caption{\textcolor{black}{Comparison of Semi-Blind FAS and Deep Learning-Based Approaches}}
\label{tab:Comparison}
\renewcommand{\arraystretch}{1.1}
\begin{tabular}{l@{\hspace{0.2em}}cccccc}
\toprule
\textbf{Method} &\textbf{Training Time Complexity} & \textbf{Inference Time Complexity} & \textbf{Space Complexity} & \textbf{Required Hardware} & \textbf{Synchronization Overhead}\\
\midrule
Semi-Blind FAS& 0 & $\mathcal{O}\left((n+1)^2K\right)$  &$\mathcal{O}\left((n+1)^2K\right)$& DSP/FPGA/CPU & NO\\
Deep-Learning Based &$\mathcal{O}\left(ENP\right)$ & $\mathcal{O}\left(P\right)$ & $\mathcal{O}\left(n+P\right)$
& CPU/GPU/TPU &YES\\
\bottomrule
\end{tabular}
\vspace{0.3em}
\footnotesize{\textit{Note:} $E$ denotes the number of training epochs, 
$N$ denotes the number of training samples, 
$P$ denotes the number of model parameters, \\
$n$ denotes the number of available CSI samples, 
and $K$ denotes the number of FAS ports.}
\end{table*}

\subsubsection{Simplicity}
The proposed semi-blind FAS relies solely on the channel correlation matrix and an arbitrary (potentially limited) amount of historical CSI. The correlation matrix can be obtained through appropriately designed pilot transmissions that enable the formation of sample covariance matrices \cite{sanguinetti2019toward}, while historical CSI is inherently available in practical communication systems. Consequently, no additional model training or parameter tuning is required, resulting in the training time complexity of $0$. Furthermore, the computation required to identify the optimal port in the proposed semi-blind FAS scheme admits a closed-form solution, as established in Corollary \ref{corollary:conditional_op} and Remarks \ref{remark:conditional_mean} and \ref{remark:conditional_var}, which depends on the adopted optimality criterion and can therefore be precomputed and tabulated. As a result, the inference time complexity of the semi-blind FAS scheme is $\mathcal{O}\left((n+1)^2 K\right)$, where $n$ denotes the number of available CSI samples and $K$ represents the number of FAS ports. The space complexity is $\mathcal{O}\left((n+1)^2 K\right)$, as it is dominated by the storage requirements of the correlation matrix. Owing to the lightweight computation, the optimal port can be computed simultaneously and independently at both the transmitter and the receiver. This property makes the proposed semi-blind FAS particularly attractive for low complexity and latency sensitive applications.

\subsubsection{Comparison with Deep Learning-Based Approaches}
Deep learning-based approaches typically rely on a large number of trainable parameters and involve complex inference procedures, in addition to a computationally intensive training process. Specifically, the training time complexity can be expressed as $\mathcal{O}\left(ENP\right)$, where $E$, $N$, and $P$ denote the number of training epochs, training samples, and model parameters, respectively. The corresponding inference time complexity is $\mathcal{O}\left(P\right)$, while the space complexity is also $\mathcal{O}\left(P\right)$. Typically, $E$ is on the order of $10^2$--$10^3$, $N$ ranges from $10^4$ to $10^6$, and $P$ is commonly in the range of $10^5$--$10^7$. As a result, the overall training complexity can easily reach $10^{11}$--$10^{16}$ operations. For large-scale architectures inspired by large language models (LLMs)\footnote{\textcolor{blue}{State-of-the-art LLMs can involve over $10^{10}$ parameters, corresponding to over 40 GB of storage for model weights in 32-bit precision, which is comparable in size to a Blu-ray movie, though requiring substantial memory to load and run.}}, $P$ can scale from $10^8$ to $10^9$ or beyond, leading to significantly increased computational and memory demands. Consequently, deep learning-based FAS schemes generally require deployment on devices equipped with strong computational capabilities, or dedicated accelerators. Moreover, due to model complexity and computational cost, when port selection is performed at either the transmitter or the receiver, additional synchronization may be required to inform the other side of the selected port at each time slot, thereby introducing extra communication overhead. On the other hand, deep learning-based approaches offer a complementary advantage. While semi-blind FAS determines the optimal port based on statistical criteria, it does not yield the instantaneous CSI of that port; therefore, conventional channel estimation remains necessary to enable accurate signal decoding. In contrast, learning-based methods can infer approximate CSI directly from historical observations, which may be sufficient for decoding in certain scenarios. From a system perspective, the choice between semi-blind FAS and deep learning-based approaches depends on the system requirements, including computational resources and the desired level of CSI accuracy. Table \ref{tab:Comparison} summarized the aforementioned comparison.}

\section{Physical Insights}\label{section:Physical_Insights}
In this section, we provide some useful insights into the proposed semi-blind FAS. We first introduce the residual entropy power ratio, as a prior measure on the reliable of the proposed port estimation. A closed-form expression of the residual entropy power ratio is provided for the proposed semi-blind FAS. Moreover, we provide the temporal dependence structure of the considered spatial–temporal FAS channel, which determines the number of historical time slots required for port prediction.

\subsection{Prior Evaluation: Residual Entropy Power Ratio}
\textcolor{black}{It is important to determine the required number of ports with available CSI, and the number of historical time slots, to obtain a sufficiently accurate estimation. In other words, given the available historical CSI, to assess the reliability of the estimation.} Inspired by information theory, we first presents the following definition that quantify the uncertainty of random variables in terms of differential entropy, hereafter referred to as entropy.
\begin{definition}[Mutual Information]
    The mutual information between vector random variables $\mathbf{X} \in \mathbb{R}^n$ and $\mathbf{Y} \in \mathbb{R}^m$ is defined as
    \begin{align}\label{eq:mutual_information}
        \mathcal{I}\left(\mathbf{X};\mathbf{Y}\right) =  \mathcal{H}\left(\mathbf{X}\right)-  \mathcal{H}\left( \mathbf{X}|\mathbf{Y}\right),
    \end{align}
    where $\mathcal{H}\left(\mathbf{X}\right)$ is the entropy of $\mathbf{X}$, and $\mathcal{H}\left(\mathbf{X}{\big|}\mathbf{Y}\right)$ denotes the conditional entropy of $\mathbf{X}$ given $\mathbf{Y}$. 
\end{definition}

\textcolor{black}{In information theory, mutual information quantifies the statistical dependence between two random variables by measuring the reduction in uncertainty about one variable after observing the other.} Consequently, mutual information is a good measure on the reliability of the estimation in semi-blind FAS. However, mutual information is not a normalized measure. Therefore, a fair comparison cannot be made using mutual information when comparing two systems with different sizes, e.g., FAS with varying sizes in our problem. \textcolor{black}{To this end, in the following, we presents the definition on the proposed residual entropy power ratio.}

\begin{definition}[Residual Entropy Power Ratio]\label{Def:Residual_entropy_power_ratio}
    \textcolor{black}{Let $\mathbf{X} \in \mathbb{R}^n$ and $\mathbf{Y} \in \mathbb{R}^m$. The residual entropy power ratio of $\mathbf{X}$ given $\mathbf{Y}$ is defined as}
    \begin{align}
        \mathcal{R}\left( \mathbf{X}|\mathbf{Y}\right) = \frac{\mathcal{E}\left(\mathbf{X}{\big|}\mathbf{Y}\right)}{\mathcal{E}\left(\mathbf{X}\right)},
    \end{align}
    where $\mathcal{E}\left(\mathbf{X}\right)$ is the entropy power of $\mathbf{X}$, and $\mathcal{E}\left(\mathbf{X}{\big|}\mathbf{Y}\right)$ denotes the entropy power of $\mathbf{X}$ given $\mathbf{Y}$. 
\end{definition}

The proposed residual entropy power ratio captures the proportion of the uncertainty in $\mathbf{X}$ that remains after conditioning on $\mathbf{Y}$. Intuitively, it measures how much of information in $\mathbf{X}$ is still `residual' or unexplained by $\mathbf{Y}$. \textcolor{black}{Importantly, the residual entropy power ratio is normalized and can therefore enable fair comparisons across systems of different sizes. The following remark establishes the relationship between the residual entropy power ratio and mutual information.} 
\begin{remark}\label{remark:Residual_MI}
     Given the vector random variable $\mathbf{Y} \in \mathbb{R}^m$, the residual entropy power ratio of $\mathbf{X} \in \mathbb{R}^n$, $\mathcal{R}\left( \mathbf{X}|\mathbf{Y}\right)$, can be expressed as
    \begin{align}
        \mathcal{R}\left( \mathbf{X}|\mathbf{Y}\right) =  \exp{ \left\{ -\frac{2}{n}\mathcal{I}\left(\mathbf{X};\mathbf{Y}\right) \right\} },
    \end{align}
    where $\mathcal{I}\left(\mathbf{X};\mathbf{Y}\right)$ is the mutual information between $\mathbf{X}$ and $\mathbf{Y}$, defined in \eqref{eq:mutual_information}.
\end{remark}

With the results provided in Remark \ref{remark:Residual_MI}, the physical interpretation of the proposed residual entropy power ratio $\mathcal{R}\left( \mathbf{X}|\mathbf{Y}\right)$ is evident. Firstly, 
the value of $\mathcal{R}\left( \mathbf{X}|\mathbf{Y}\right)$ lies within the interval $(0,1]$. The residual entropy power ratio $\mathcal{R}\left( \mathbf{X}|\mathbf{Y}\right)\to 0$, when the mutual information $\mathcal{I}\left(\mathbf{X};\mathbf{Y}\right) \to +\infty$, indicates the information contained in $\mathbf{X}$, or equivalently the uncertainty of $\mathbf{X}$, is completely conveyed by $\mathbf{Y}$. In other words, given $\mathbf{Y}$, the residual entropy of $\mathbf{X}$, and therefore the residual information of $\mathbf{X}$, is almost zero. In contrast, when $\mathcal{I}\left(\mathbf{X};\mathbf{Y}\right)=0$, the residual entropy power ratio $\mathcal{R}\left( \mathbf{X}|\mathbf{Y}\right)=1$, the information of $\mathbf{X}$ remains unaffected by the knowledge of $\mathbf{Y}$. To rephrase, given $\mathbf{X}$, the residual entropy of $\mathbf{X}$ remains unchanged. 

Next, we provide the expression of the residual entropy power ratio of the proposed semi-blind FAS.

\begin{theorem}\label{theorem:residual_entropy_power_ratio}
    The residual entropy power ratio of the semi-blind FAS, that is, the residual entropy power ratio of $\hat{\boldsymbol{\mathrm{h}}}$ given $\bar{\boldsymbol{\mathrm{h}}}$, can be expressed as
    \begin{align}\label{eq:residual_entropy_power_ratio_partial_FAS}
        \mathcal{R}\left(\hat{\boldsymbol{\mathrm{h}}}{\big|}\bar{\boldsymbol{\mathrm{h}}}\right) = {\left(\frac{\det{\left(\boldsymbol{\Sigma}_{\hat{\boldsymbol{\mathrm{h}}}|\bar{\boldsymbol{\mathrm{h}}}}\right)}}{\det{\left(\boldsymbol{\Sigma}_{\hat{\boldsymbol{\mathrm{h}}}} \right)}}\right)}^{\frac{1}{K}},
    \end{align}
    where $\boldsymbol{\Sigma}_{\hat{\boldsymbol{\mathrm{h}}}}$ and $\boldsymbol{\Sigma}_{\hat{\boldsymbol{\mathrm{h}}}|\bar{\boldsymbol{\mathrm{h}}}}$
    are given in \eqref{eq:correlation_Matirx} and \eqref{eq:conditional_correlation_matrix_h}, respectively. $\det{(\cdot)}$ denotes the determinant operator.
\end{theorem}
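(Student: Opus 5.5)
The plan is to reduce everything to Gaussian differential entropies and then apply Remark~\ref{remark:Residual_MI}. Write the complex vectors as real vectors, $\hat{\boldsymbol{\mathrm{h}}} \leftrightarrow [\hat{\boldsymbol{x}},\hat{\boldsymbol{y}}] \in \mathbb{R}^{2K}$ and $\bar{\boldsymbol{\mathrm{h}}} \leftrightarrow [\bar{\boldsymbol{x}},\bar{\boldsymbol{y}}]$. Entropy, entropy power and Definition~\ref{Def:Residual_entropy_power_ratio} are then applied with real dimension $n=2K$.

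\textbf{Unconditional entropy.} Because $\boldsymbol{\mathrm{h}}$ is circularly symmetric with $\mathcal{L}(\boldsymbol{\mathrm{h}})=\mathcal{CN}(\boldsymbol{0},\sigma_0^2\boldsymbol{\Sigma})$, the in-phase and quadrature parts are independent, each with covariance $\frac{\sigma_0^2}{2}\boldsymbol{\Sigma}$. Hence $[\hat{\boldsymbol{x}},\hat{\boldsymbol{y}}]$ is real Gaussian with block-diagonal covariance $\frac{\sigma_0^2}{2}\mathrm{diag}(\boldsymbol{\Sigma}_{\hat{\boldsymbol{\mathrm{h}}}},\boldsymbol{\Sigma}_{\hat{\boldsymbol{\mathrm{h}}}})$. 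The standard Gaussian entropy formula then gives
\begin{align}
\mathcal{H}(\hat{\boldsymbol{\mathrm{h}}}) = \log\det\left(\pi e\sigma_0^2\boldsymbol{\Sigma}_{\hat{\boldsymbol{\mathrm{h}}}}\right).
\end{align}

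\textbf{Conditional entropy.} Lemma~\ref{lemma:Law_Conditional_Normal} and Theorem~\ref{theorem:Law_Conditional_FAS_Channel} show that, given $\bar{\boldsymbol{\mathrm{h}}}=\boldsymbol{a}_c$, the vector $\hat{\boldsymbol{\mathrm{h}}}$ is $\mathcal{CN}$ with covariance $\sigma_0^2\boldsymbol{\Sigma}_{\hat{\boldsymbol{\mathrm{h}}}|\bar{\boldsymbol{\mathrm{h}}}}$. This covariance does not depend on $\boldsymbol{a}_c$, and the real and imaginary parts are again conditionally independent with equal covariance $\frac{\sigma_0^2}{2}\boldsymbol{\Sigma}_{\hat{\boldsymbol{\mathrm{h}}}|\bar{\boldsymbol{\mathrm{h}}}}$. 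So the entropy of the conditional law is the same for every realization, and averaging over $\bar{\boldsymbol{\mathrm{h}}}$ gives
\begin{align}
\mathcal{H}(\hat{\boldsymbol{\mathrm{h}}}|\bar{\boldsymbol{\mathrm{h}}}) = \log\det\left(\pi e\sigma_0^2\boldsymbol{\Sigma}_{\hat{\boldsymbol{\mathrm{h}}}|\bar{\boldsymbol{\mathrm{h}}}}\right).
\end{align}
The mean shift $\boldsymbol{\mathrm{\mu}}_{\hat{\boldsymbol{\mathrm{h}}}|\bar{\boldsymbol{\mathrm{h}}}=\boldsymbol{a}_c}$ does not affect differential entropy. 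Subtracting, the factors $(\pi e\sigma_0^2)^K$ cancel, and the mutual information is
\begin{align}
\mathcal{I}(\hat{\boldsymbol{\mathrm{h}}};\bar{\boldsymbol{\mathrm{h}}}) = \log\frac{\det(\boldsymbol{\Sigma}_{\hat{\boldsymbol{\mathrm{h}}}})}{\det(\boldsymbol{\Sigma}_{\hat{\boldsymbol{\mathrm{h}}}|\bar{\boldsymbol{\mathrm{h}}}})}.
\end{align}

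\textbf{Conclusion.} Substitute this into Remark~\ref{remark:Residual_MI} with $n=2K$. The exponent becomes $-\frac{2}{2K}\mathcal{I} = -\frac{1}{K}\mathcal{I}$, which yields exactly \eqref{eq:residual_entropy_power_ratio_partial_FAS}. Alternatively, one can compute the entropy powers $\mathcal{E}=\frac{1}{2\pi e}\exp\{\frac{2}{n}\mathcal{H}\}$ directly and take their ratio; this gives the same result and confirms that the real-dimension bookkeeping ($2K$, not $K$) is what produces the $1/K$ power.

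\textbf{Main obstacle.} The delicate step is the degenerate case. When $\boldsymbol{\Sigma}_{\bar{\boldsymbol{\mathrm{h}}}}$ is singular, the generalized inverse is needed. I would argue that the Schur complement formula from Lemma~\ref{lemma:Law_Conditional_Normal} is still the correct conditional covariance, because $\bar{\boldsymbol{\mathrm{h}}}$ almost surely lies in the range of $\boldsymbol{\Sigma}_{\bar{\boldsymbol{\mathrm{h}}}}$; the entropy computation then goes through unchanged. I would also note that $\boldsymbol{\Sigma}_{\hat{\boldsymbol{\mathrm{h}}}}$ is assumed nonsingular, as it is for distinct ports under the Bessel correlation. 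If $\boldsymbol{\Sigma}_{\hat{\boldsymbol{\mathrm{h}}}|\bar{\boldsymbol{\mathrm{h}}}}$ is singular, both sides equal $0$, consistent with the limit $\mathcal{I}\to\infty$.
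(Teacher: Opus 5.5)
Your proposal is correct and takes essentially the same route as the paper. The paper evaluates the Gaussian entropy powers of $\hat{\boldsymbol{\mathrm{h}}}$ and $\hat{\boldsymbol{\mathrm{h}}}|\bar{\boldsymbol{\mathrm{h}}}$ and takes their ratio, which is your ``alternative'' computation; your main path through the mutual information and Remark~\ref{remark:Residual_MI} with $n=2K$ is equivalent, and your explicit tracking of the real dimension $2K$ and the constant $\pi e\sigma_0^2$ is more careful than the paper, which writes the entropy powers as $(\det(\cdot))^{1/K}$ and silently drops the common factor $\sigma_0^2/2$ that cancels in the ratio.
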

\begin{proof}
    See Appendix \ref{proof:residual_entropy_power_ratio}.
\end{proof}

The results in Theorem \ref{theorem:residual_entropy_power_ratio} provides the useful insight of the residual entropy power ratio. We provide the following form to simplify its computation.
\begin{corollary} \label{corollary:residual_exp_entropy_simplify}
     Given $\bar{\boldsymbol{\mathrm{h}}}$, the residual entropy power ratio of $\hat{\boldsymbol{\mathrm{h}}}$ can be expressed as
     \begin{align}\label{eq:residual_power_entropy_simplify}
        \mathcal{R}\left(\hat{\boldsymbol{\mathrm{h}}}{\big|}\bar{\boldsymbol{\mathrm{h}}}\right) = {\left(\det{\left(\boldsymbol{\mathrm{I}}_{K}-\boldsymbol{\Sigma}_{\hat{\boldsymbol{\mathrm{h}}}} ^{-1}\boldsymbol{\Sigma}_{\hat{\boldsymbol{\mathrm{h}}},\bar{\boldsymbol{\mathrm{h}}}} \boldsymbol{\Sigma}_{\bar{\boldsymbol{\mathrm{h}}}}^{-1} \boldsymbol{\Sigma}_{\hat{\boldsymbol{\mathrm{h}}},\bar{\boldsymbol{\mathrm{h}}}}^{\top}
        \right)}\right)}^{\frac{1}{K}},
    \end{align}
    where $\boldsymbol{\mathrm{I}}_{K}$ denotes the identity matrix of size $K$.
\end{corollary}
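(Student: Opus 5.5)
The plan is to start from Theorem~\ref{theorem:residual_entropy_power_ratio}, which gives $\mathcal{R}(\hat{\boldsymbol{\mathrm{h}}}|\bar{\boldsymbol{\mathrm{h}}})$ as the $K$-th root of $\det(\boldsymbol{\Sigma}_{\hat{\boldsymbol{\mathrm{h}}}|\bar{\boldsymbol{\mathrm{h}}}})/\det(\boldsymbol{\Sigma}_{\hat{\boldsymbol{\mathrm{h}}}})$. Then substitute the Schur-complement form of $\boldsymbol{\Sigma}_{\hat{\boldsymbol{\mathrm{h}}}|\bar{\boldsymbol{\mathrm{h}}}}$ from \eqref{eq:conditional_correlation_matrix_h}. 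Writing $\boldsymbol{\mathrm{B}} \triangleq \boldsymbol{\Sigma}_{\hat{\boldsymbol{\mathrm{h}}},\bar{\boldsymbol{\mathrm{h}}}} \boldsymbol{\Sigma}_{\bar{\boldsymbol{\mathrm{h}}}}^{-1} \boldsymbol{\Sigma}_{\hat{\boldsymbol{\mathrm{h}}},\bar{\boldsymbol{\mathrm{h}}}}^{\top}$, we have $\boldsymbol{\Sigma}_{\hat{\boldsymbol{\mathrm{h}}}|\bar{\boldsymbol{\mathrm{h}}}} = \boldsymbol{\Sigma}_{\hat{\boldsymbol{\mathrm{h}}}} - \boldsymbol{\mathrm{B}}$.

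The key step is to factor $\boldsymbol{\Sigma}_{\hat{\boldsymbol{\mathrm{h}}}}$ out on the left:
\begin{align*}
\boldsymbol{\Sigma}_{\hat{\boldsymbol{\mathrm{h}}}} - \boldsymbol{\mathrm{B}} = \boldsymbol{\Sigma}_{\hat{\boldsymbol{\mathrm{h}}}}\left(\boldsymbol{\mathrm{I}}_{K} - \boldsymbol{\Sigma}_{\hat{\boldsymbol{\mathrm{h}}}}^{-1}\boldsymbol{\mathrm{B}}\right).
\end{align*}
This factorisation is legitimate once we know $\boldsymbol{\Sigma}_{\hat{\boldsymbol{\mathrm{h}}}}$ is invertible. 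Then apply multiplicativity of the determinant, $\det(\boldsymbol{\Sigma}_{\hat{\boldsymbol{\mathrm{h}}}}(\boldsymbol{\mathrm{I}}_K-\boldsymbol{\Sigma}_{\hat{\boldsymbol{\mathrm{h}}}}^{-1}\boldsymbol{\mathrm{B}})) = \det(\boldsymbol{\Sigma}_{\hat{\boldsymbol{\mathrm{h}}}})\det(\boldsymbol{\mathrm{I}}_K-\boldsymbol{\Sigma}_{\hat{\boldsymbol{\mathrm{h}}}}^{-1}\boldsymbol{\mathrm{B}})$. The factor $\det(\boldsymbol{\Sigma}_{\hat{\boldsymbol{\mathrm{h}}}})$ cancels with the denominator in \eqref{eq:residual_entropy_power_ratio_partial_FAS}, and taking the $K$-th root gives \eqref{eq:residual_power_entropy_simplify}. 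The identity has size $K$ because $\hat{\mathcal{K}}^{T}=\mathcal{K}$, so $\boldsymbol{\Sigma}_{\hat{\boldsymbol{\mathrm{h}}}}$ is $K\times K$.

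The only real obstacle is justifying the invertibility of $\boldsymbol{\Sigma}_{\hat{\boldsymbol{\mathrm{h}}}}$. The ratio in Theorem~\ref{theorem:residual_entropy_power_ratio} already presupposes $\det(\boldsymbol{\Sigma}_{\hat{\boldsymbol{\mathrm{h}}}})\neq 0$; otherwise the unconditional entropy power is degenerate and $\mathcal{R}$ is undefined. I would therefore state this nondegeneracy as a standing assumption inherited from that theorem. It holds for distinct port positions with a Bessel-type correlation kernel, which is positive definite. The generalized inverse of $\boldsymbol{\Sigma}_{\bar{\boldsymbol{\mathrm{h}}}}$ plays no role here: it enters only through $\boldsymbol{\mathrm{B}}$ and is unchanged by the manipulation.
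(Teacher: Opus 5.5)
Your proposal is correct and follows essentially the same route as the paper's proof. Both substitute the Schur-complement form of $\boldsymbol{\Sigma}_{\hat{\boldsymbol{\mathrm{h}}}|\bar{\boldsymbol{\mathrm{h}}}}$ into Theorem~\ref{theorem:residual_entropy_power_ratio}, factor $\boldsymbol{\Sigma}_{\hat{\boldsymbol{\mathrm{h}}}}$ out on the left using its invertibility, and cancel $\det(\boldsymbol{\Sigma}_{\hat{\boldsymbol{\mathrm{h}}}})$ by multiplicativity of the determinant. Your explicit justification of that invertibility, via strict positive definiteness of the $J_0$ kernel at distinct port positions, is a small improvement over the paper, which only asserts it.
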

\begin{proof}
    See Appendix \ref{proof:residual_exp_entropy_simplify}.
\end{proof}

\textcolor{black}{We now have a prior evaluation that can be utilized as a general and universally applicable optimization objective, in the design of semi-blind FAS. It is worth note, residual entropy power ratio can serve not only as prior evaluation on the accuracy of the proposed semi-blind FAS, but also as a fundamental descriptor in the context of FAS port estimation.}

\textcolor{black}{Specifically, prior studies on port estimation commonly assume that, when a certain fraction of the total port CSI is available (e.g., $10\%$), the CSI of the remaining FAS ports can be accurately estimated. However, the accuracy of port estimation is determined not only by the number or percentage of observed ports, but also by their spatial and temporal positions. For clarity of exposition, we refer to this spatial and temporal distribution of observed ports as the \textit{port sampling strategy}. As demonstrated in Theorem \ref{theorem:residual_entropy_power_ratio}, where the same number of observed ports, when located differently within the spatial-temporal domain, results in different values of $\bar{\boldsymbol{\mathrm{h}}}$, which in turn leads to distinctly different outcomes of residual entropy power ratio. Therefore, we claim that, describing the estimation performance of deep learning models solely in terms of the percentage of observations is imprecise. A more appropriate and informative approach is to evaluate the prediction accuracy with respect to the residual entropy power ratio.} We provide the definition of the optimal port sampling strategy in the following.

\begin{definition}[Optimal Port Sampling Strategy]
The optimal port sampling strategy refers to the selection of sampling ports across spatial and temporal domains that minimizes the residual entropy power ratio, given a fixed number of sampling ports and sampling time slots.
\end{definition}

\textcolor{black}{The definition of the optimal port sampling strategy is presented. Nevertheless, finding a solution to this problem is outside the scope of the current study. An example of the optimal port sampling strategy, obtained via exhaustive search in a small system, is provided in Section \ref{section:results}.}

\subsection{Temporal Dependence Structures: Markov and Independence Conditions}

From an estimation perspective, access to a larger amount of historical CSI generally improves prediction accuracy. However, a key question is whether retaining the entire CSI history is necessary for accurate port estimation. In particular, it is important to determine the number of historical time slots sufficient for reliable port prediction, or equivalently, the temporal range beyond which historical CSI no longer contributes \textcolor{black}{effectively} to estimation accuracy.

To address this problem, the following theorem is presented, establishing that, under certain conditions, the considered FAS spatial-temporal channel can be approximated as a Markov process. In order to provide a general result regarding the Markov property of the FAS in the time domain, it is assumed that perfect CSI is available on all FAS ports at each historical time slot, while no CSI is available at the desired $T$-th time slot.

\begin{theorem}\label{theorem:FAS_Markov}
    The FAS spatial-temporal channel structure approximates the Markov process, in the sense that the conditional independence holds as
    \begin{align}\label{eq:FAS_Markov_condition}
    \frac{\left\| (\boldsymbol{\Sigma}_{{\mathcal{\hat{K}}^{T}},{\mathcal{\bar{K}}^{T_{0}-1}} \mid\mathcal{\bar{K}}^{T_{0}}})_{1:K,1+K:2K}\right\|_{L^1}}{K^2}\textcolor{black}{\to} 0,
\end{align}
where $T_{0}$ denotes the start index of the historical CSI, $(\boldsymbol{\Sigma}_{{\mathcal{\hat{K}}^{T}},{\mathcal{\bar{K}}^{T_{0}-1}} \mid\mathcal{\bar{K}}^{T_{0}}})_{1:K,1+K:2K}$ denotes the submatrix formed by the rows indexed from $1$ to $K$ and the columns indexed from $1+K$ to $2K$ of the conditional correlation matrix $\boldsymbol{\Sigma}_{{\mathcal{\hat{K}}^{T}},{\mathcal{\bar{K}}^{T_{0}-1}} \mid
 \mathcal{\bar{K}}^{T_{0}}}$, and $\left\| \cdot\right\|_{L^1}$ denotes the $L^1$ norm. The conditional correlation matrix is computed as
\begin{align}\label{eq:conditional_correlation_matrix_Markov}
&\boldsymbol{\Sigma}_{{\mathcal{\hat{K}}^{T}},{\mathcal{\bar{K}}^{T_{0}-1}} \mid
 \mathcal{\bar{K}}^{T_{0}}}= \boldsymbol{\Sigma}_{{\mathcal{\hat{K}}^{T}},{\mathcal{\bar{K}}^{T_{0}-1}}}- \nonumber\\&\quad\boldsymbol{\Sigma}_{({\mathcal{\hat{K}}^{T}},{\mathcal{\bar{K}}^{T_{0}-1}}),\mathcal{\bar{K}}^{T_{0}}}\boldsymbol{\Sigma}_{{\mathcal{\bar{K}}^{T_{0}}}}^{-1}\boldsymbol{\Sigma}_{({\mathcal{\hat{K}}^{T}},{\mathcal{\bar{K}}^{T_{0}-1}}),\mathcal{\bar{K}}^{T_{0}}}^{\top}.
\end{align}
Each component in \eqref{eq:conditional_correlation_matrix_Markov} is given by
\begin{equation}\label{eq:M_matrix_component}
\left\{\begin{aligned}
\boldsymbol{\Sigma}_{{\mathcal{\hat{K}}^{T}},{\mathcal{\bar{K}}^{T_{0}-1}}} &=  \left[
    \begin{array}{cc}
         \boldsymbol{\Sigma}_{{\mathcal{\hat{K}}^{T}}}   &\boldsymbol{\Sigma}_{{\mathcal{\hat{K}}^{T}},{\mathcal{\bar{K}}^{T_{0}-1}}}\\ \boldsymbol{\Sigma}_{{\mathcal{\hat{K}}^{T}},{\mathcal{\bar{K}}^{T_{0}-1}}}^{\top}& \boldsymbol{\Sigma}_{\mathcal{\bar{K}}^{T_{0}-1}}
    \end{array}
    \right],\\
\boldsymbol{\Sigma}_{({\mathcal{\hat{K}}^{T}},{\mathcal{\bar{K}}^{T_{0}-1}}),\mathcal{\bar{K}}^{T_{0}}} &=  \left[ \boldsymbol{\Sigma}_{{\mathcal{\hat{K}}^{T}},{\mathcal{\bar{K}}^{T_{0}}}},\boldsymbol{\Sigma}_{{\mathcal{\bar{K}}^{T_{0}-1}},{\mathcal{\bar{K}}^{T_{0}}}}\right]^{\top}.
\end{aligned}\right.
\end{equation}
\end{theorem}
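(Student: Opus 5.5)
Throughout, I read the statement as saying that, given $\boldsymbol{h}^{T_{0}}$ (slot $T_0$ being the most recent CSI slot), the target $\hat{\boldsymbol{h}}$ is approximately conditionally independent of the older slot $\boldsymbol{h}^{T_{0}-1}$. For jointly Gaussian vectors, conditional independence given $\boldsymbol{h}^{T_{0}}$ holds exactly iff the off-diagonal block of the conditional covariance vanishes. So the plan reduces the claim to a statement about one block of a Schur complement. The main work is then to show that this block is small on average, which is what the normalized $L^1$ criterion \eqref{eq:FAS_Markov_condition} encodes.

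\textbf{Step 1 (conditional covariance).} Stack $[\hat{\boldsymbol{h}},\boldsymbol{h}^{T_0-1}]$ and apply Lemma~\ref{lemma:Law_Conditional_Normal}, or Theorem~\ref{theorem:Law_Conditional_FAS_Channel}, with this stacked vector in the role of the unknowns and $\boldsymbol{h}^{T_0}$ as the observation. This gives a complex Gaussian conditional law whose covariance is exactly the Schur complement \eqref{eq:conditional_correlation_matrix_Markov}, with blocks as in \eqref{eq:M_matrix_component}. The $(1{:}K,\,K{+}1{:}2K)$ block is the conditional cross-covariance
\[
\boldsymbol{\Sigma}_{\hat{\mathcal K}^T,\bar{\mathcal K}^{T_0-1}}-\boldsymbol{\Sigma}_{\hat{\mathcal K}^T,\bar{\mathcal K}^{T_0}}\boldsymbol{\Sigma}_{\bar{\mathcal K}^{T_0}}^{-1}\boldsymbol{\Sigma}_{\bar{\mathcal K}^{T_0-1},\bar{\mathcal K}^{T_0}}^{\top}.
\]

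\textbf{Step 2 (block criterion).} For Gaussian vectors, uncorrelated is equivalent to independent, applied here to the conditional law. Hence $\hat{\boldsymbol{h}}\perp \boldsymbol{h}^{T_0-1}\mid \boldsymbol{h}^{T_0}$ holds exactly iff this block is zero. By continuity of the Gaussian density in its covariance, the dependence is approximately removed when the block is small entrywise. Measuring that smallness by the averaged $L^1$ norm $\|\cdot\|_{L^1}/K^2$ gives precisely the statement. A bound on the mutual information $\mathcal{I}(\hat{\boldsymbol{h}};\boldsymbol{h}^{T_0-1}\mid\boldsymbol{h}^{T_0})$ could be added via a log-determinant as in Theorem~\ref{theorem:residual_entropy_power_ratio}, but this is not needed.

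\textbf{Step 3 (the block tends to zero).} This is the main obstacle. By \eqref{eq:space_time_correlation_function}, the entries of $\boldsymbol{\Sigma}_{\hat{\mathcal K}^T,\bar{\mathcal K}^{T_0-1}}$ are $J_0$ evaluated at arguments that grow with $\beta=2\pi\nu\tau(T-T_0+1)/\lambda$. Since $|J_0(x)|\le \sqrt{2/(\pi x)}$ asymptotically, these entries decay like $\beta^{-1/2}$ as $\nu\tau(T-T_0)$ grows.

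For the second term, I would argue as follows.
\begin{itemize}
\item $\boldsymbol{\Sigma}_{\bar{\mathcal K}^{T_0}}$ is fixed; its generalized inverse is bounded on the range of the matrix, which is where the relevant vectors lie.
\item $\boldsymbol{\Sigma}_{\bar{\mathcal K}^{T_0-1},\bar{\mathcal K}^{T_0}}$ has entries bounded by one.
\item $\boldsymbol{\Sigma}_{\hat{\mathcal K}^T,\bar{\mathcal K}^{T_0}}$ has entries that again decay as $T-T_0$ grows.
\end{itemize}
Submultiplicativity of the norms then bounds the second term by a constant times $\max|\phi|$ over lags of at least $T-T_0$. Both terms therefore vanish entrywise, and so does their average.

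The delicate point is uniformity in $K$ and the conditioning of $\boldsymbol{\Sigma}_{\bar{\mathcal K}^{T_0}}$ for densely packed ports. I would handle this by bounding the second term through the projection interpretation: its entries are covariances of projections onto the span of $\boldsymbol{h}^{T_0}$, so by Cauchy–Schwarz they are bounded by the norms of projected components. Those norms are controlled by the decaying cross-correlations, not by $\|\boldsymbol{\Sigma}^{-1}\|$.

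The conclusion should be stated honestly as an approximation. The Bessel kernel is not exactly Markov, and there is no exact Markov structure in general. The claim is only that once $\nu\tau(T-T_0)/\lambda$ is large enough, the normalized $L^1$ norm of the conditional cross block falls below any prescribed tolerance. This yields the design trade-off between the slot duration $\tau$ and the number of historical slots $\iota$.
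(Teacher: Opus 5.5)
Your Steps 1 and 2 are the paper's proof. The paper writes the joint correlation matrix of $[\boldsymbol{\mathrm{h}}_{\hat{\mathcal{K}}^{T}}^{T},\boldsymbol{\mathrm{h}}_{\bar{\mathcal{K}}^{T_0-1}}^{T_0-1},\boldsymbol{\mathrm{h}}_{\bar{\mathcal{K}}^{T_0}}^{T_0}]$ and takes the Schur complement with respect to the $T_0$ block. It then uses the Gaussian fact that conditional independence holds iff the $(1{:}K,\,1{+}K{:}2K)$ block vanishes. Since an exact zero is unattainable for a compact FAS, it replaces this by the $K^2$-normalized $L^1$ criterion. The paper stops there: the ``$\to 0$'' is a criterion to be checked numerically (the Markov-condition figure), not a limit it proves. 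A minor point: the paper defines $T_0$ as the oldest retained slot, not the most recent one; the algebra is unaffected.

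Your Step 3 therefore goes beyond the paper, and only its fixed-$K$ version is sound. For fixed $K$ and distinct ports, $\boldsymbol{\Sigma}_{\bar{\mathcal{K}}^{T_0}}$ is positive definite, so $\|\boldsymbol{\Sigma}_{\bar{\mathcal{K}}^{T_0}}^{-1}\|$ is a finite (if huge) constant, and submultiplicativity gives decay.

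Your proposed repair for uniformity in $K$ fails. Let $P$ be the projection onto the span of $\boldsymbol{\mathrm{h}}^{T_0}$ and $\boldsymbol{c}_k$ the vector of correlations of $h_k^{T}$ with $\boldsymbol{\mathrm{h}}^{T_0}$. Then $\|Ph_k^{T}\|^2=\boldsymbol{c}_k^{\top}\boldsymbol{\Sigma}_{\bar{\mathcal{K}}^{T_0}}^{-1}\boldsymbol{c}_k$, which is governed by the inverse. Small entries of $\boldsymbol{c}_k$ do not make it small when the matrix is ill-conditioned.

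Concretely, the $J_0$ (isotropic-scattering) field is band-limited, so its values on a segment determine it on the whole line. As the ports become dense, the span of the slot-$T_0$ aperture therefore tends to the span of the field on the entire line through the FAS. For $\theta-\omega=\pi/2$ (the paper's simulation setting) this gives $\|Ph_k^{T}\|^2\to\tfrac12\big(1+J_0(4\pi\nu\tau(T-T_0)/\lambda)\big)\to\tfrac12$. Your Cauchy--Schwarz bound thus stays of order one.

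The block does still vanish in this limit, but through cancellation rather than a norm bound. With $D=\nu\tau(T-T_0)$ and $\delta$ the port separation, its entries become $-\tfrac12\big[J_0\big(\tfrac{2\pi}{\lambda}\sqrt{(D-\nu\tau)^2+\delta^2}\big)-J_0\big(\tfrac{2\pi}{\lambda}\sqrt{(D+\nu\tau)^2+\delta^2}\big)\big]$.

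There is also a structural limitation. Bounding the two terms separately can never beat the decay of the raw correlations with slot $T$, which is just the unconditional-independence criterion of the subsequent remark. So Step 3 cannot explain why the Markov criterion is met at shorter historical ranges, which is the conclusion the paper draws from it.
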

\begin{proof}
    See Appendix \ref{proof:FAS_Markov}.
\end{proof}

Another straightforward criterion for determining the temporal range that contributes to estimation accuracy is the unconditional independence. This condition implies that the CSI values at different time slots are statistically independent, and therefore, historical CSI provides no useful information for predicting future FAS channel states. The following remark specifies the condition under which the FAS spatial-temporal channel can be approximated as temporally independent from previous time slots.
\begin{remark}
    The FAS spatial-temporal channel at the $T$-th time slot is temporally independent from the $T_{0}$-th time slot, under the condition that
    \begin{align}\label{eq:FAS_Independent_condition}
    \frac{\left\| (\boldsymbol{\Sigma}_{{\mathcal{\hat{K}}^{T}} \mid\mathcal{\bar{K}}^{T_{0}}})_{1:K,1+K:2K}\right\|_{L^1}}{K^2}\textcolor{black}{\to} 0,
\end{align}
where 
\begin{align}
&\boldsymbol{\Sigma}_{{\mathcal{\hat{K}}^{T}} \mid
 \mathcal{\bar{K}}^{T_{0}}}= \boldsymbol{\Sigma}_{{\mathcal{\hat{K}}^{T}}}-\boldsymbol{\Sigma}_{{\mathcal{\hat{K}}^{T}},\mathcal{\bar{K}}^{T_{0}}}\boldsymbol{\Sigma}_{{\mathcal{\bar{K}}^{T_{0}}}}^{-1}\boldsymbol{\Sigma}_{{\mathcal{\hat{K}}^{T}},\mathcal{\bar{K}}^{T_{0}}}^{\top}.
\end{align}
\end{remark}

\section{Numerical Results and Discussion}\label{section:results}

Here, we provide simulation results to evaluate the performance of the proposed semi-blind FAS. In the simulations, we set $\sigma^2 = \sigma_{s}^2 = 1$, $\sigma_{\eta}^2 = 10^{-1}$. The semi-blind FAS related parameters are set as $\omega=0$, $\theta = \frac{\pi}{2}$, $W=0.5$, $\nu = 15~\mathrm{m/s}$ and $\iota = 1$ unless otherwise specified. The network parameters are configured to meet the requirements of sub-6~GHz band in 5G \cite{3gpp38_211}, with a wavelength of $\lambda = 0.1~\mathrm{m}$, and a time slot duration of $\tau = 2.5\times 10^{-4}~\mathrm{s}$, unless noted otherwise. \textcolor{black}{Although the definition of optimality may vary, as discussed in Section \ref{subsec:semi-blind-FAS}, for clarity and interpretability, the numerical results presented here are obtained based on the definition that minimizes the outage probability, as formalized in Definition \ref{def:optimal_OP}.}

\begin{figure}[!htbp]
    \centering
    \includegraphics[width=1.0\linewidth]{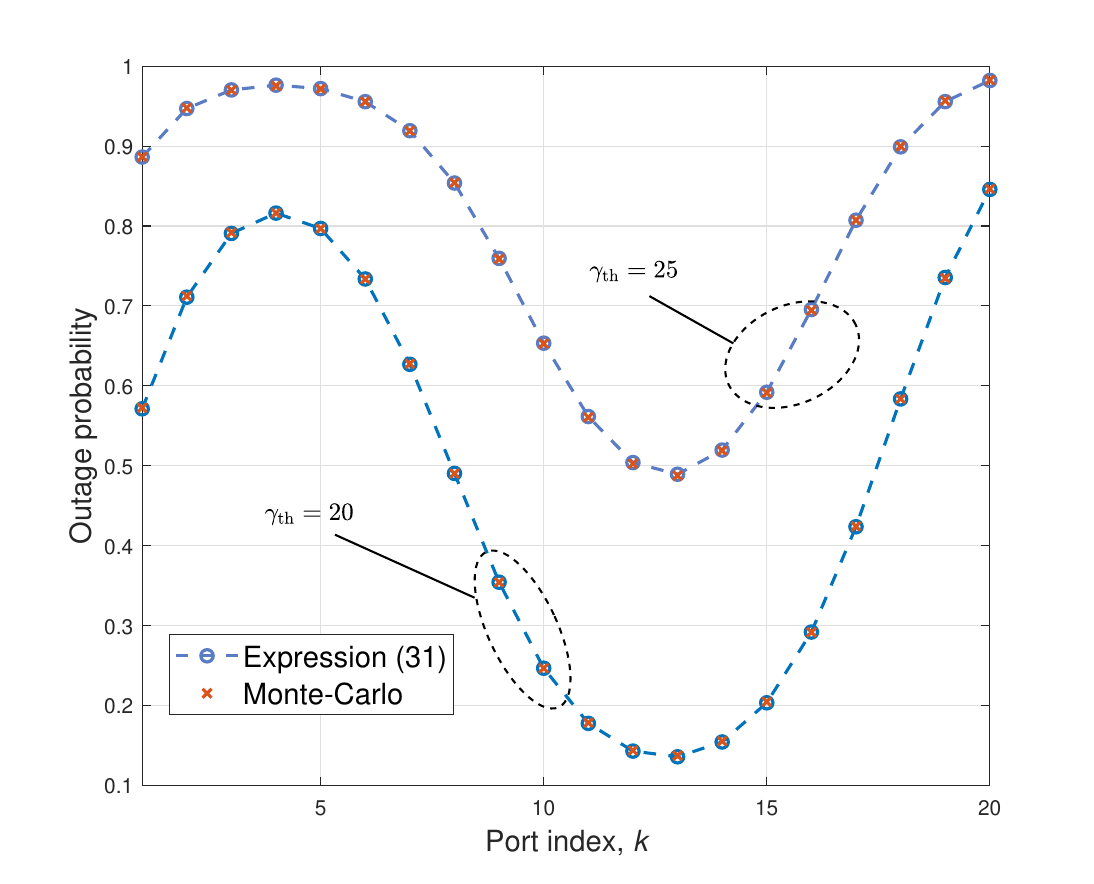}
    \caption{Estimated outage probability of each FAS port at the $T$-th time slot, given different threshold $\gamma_{\mathrm{th}}$ when $K=20$, $\tau = 10^{-3}~\mathrm{s}$, $\bar{\mathcal{K}}^{T\text{-}1} = \{5,10,15\}$ and $|\boldsymbol{a}_{c}| = [1.33,1.56,1.30]$.}
    \label{Fig:OP_Index_Monte}
\end{figure}

\begin{figure}[!htbp]
    \centering
    \includegraphics[width=1.0\linewidth]{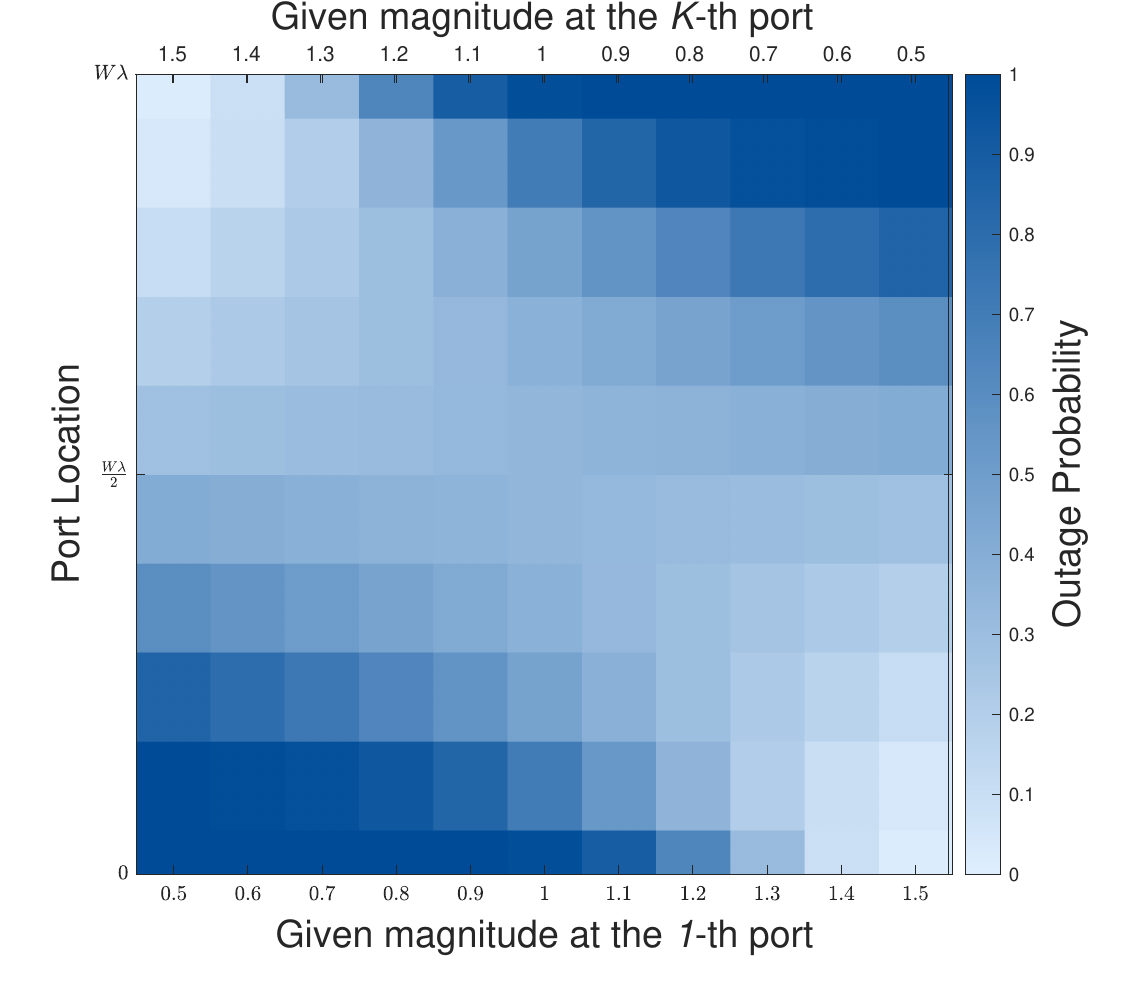}
    \caption{Estimated outage probability of each FAS port against the given magnitude, when $K=10$ and $\gamma_{\mathrm{th}} = 15$.}
    \label{Fig:OP_Given_Magnitude_Port_Location}
\end{figure}

Fig.~\ref{Fig:OP_Index_Monte} provides numerical results for the estimated outage probability of each FAS port at the $T$-th time slot. Firstly, the results confirm that the analytical expressions \eqref{eq:conditional_op} align closely with the Monte-Carlo results, validating
the accuracy of the proposed semi-blind FAS. Secondly, increasing the threshold has a minor effect on the shape of the curve; instead, it primarily results in an increase of the outage probability on each port. Moreover, conditioned on the given historical CSI, there exists a port with the lowest outage probability among all FAS ports, i.e., the statistical optimal port.

\begin{figure}
    \centering
    \includegraphics[width=1.0\linewidth]{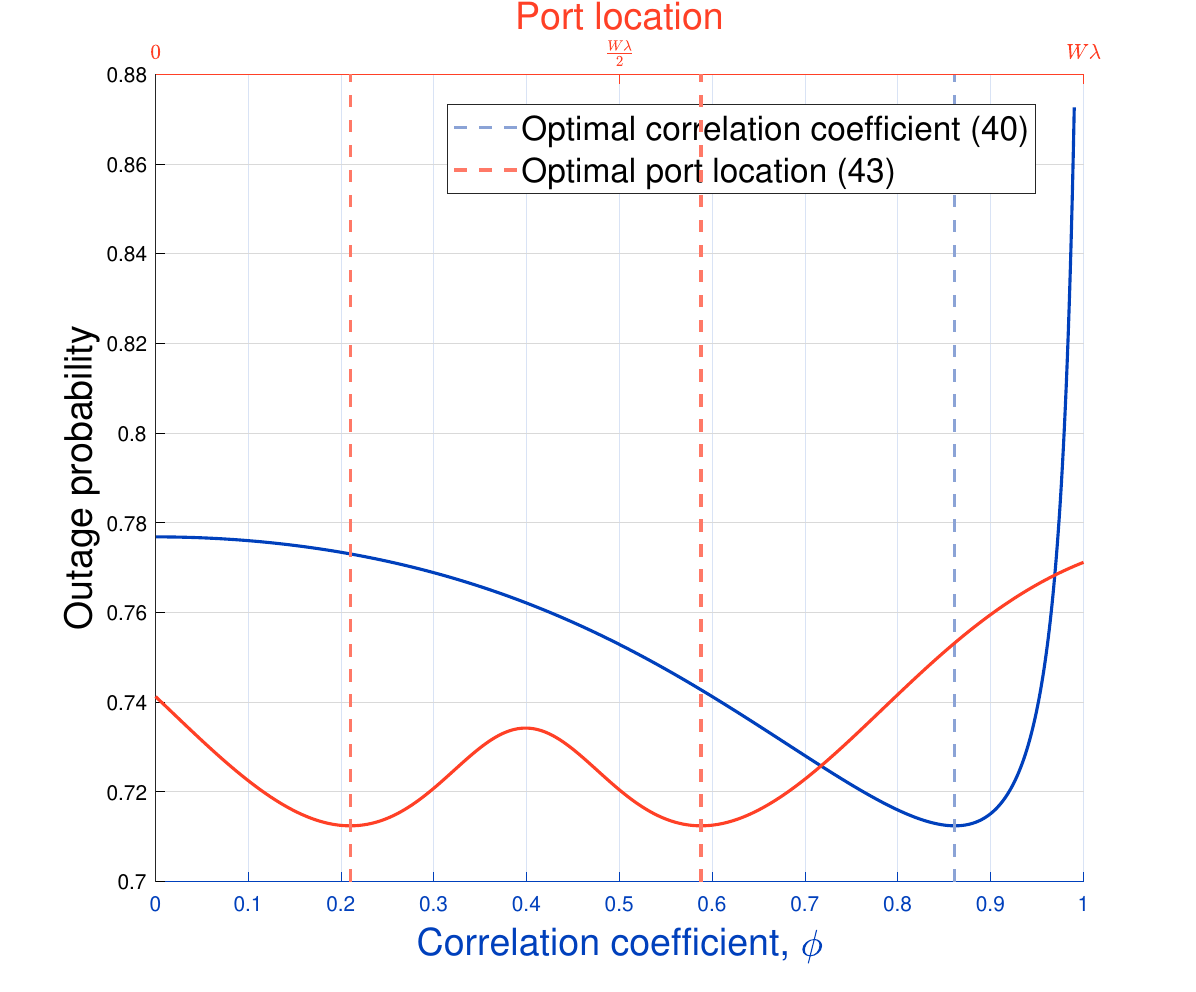}
    \caption{Estimated outage probability against the correlation coefficient and FAS port location, when $K=500$, $\tau =2.5\times 10^{-4}~\mathrm{s}$, $\gamma_{\mathrm{th}}=15$,
    $\bar{k} = 200$ and $|a_{c}| = 1.12$.}
    \label{Fig:Optimzal_Correlation_Coefficient}
\end{figure}

\begin{figure*}[!htbp]
    \centering
  \subfloat[$\tau=10^{-5}~\mathrm{s}$~{\color{black}(high temporal resolution)}\label{fig:spatial_a}]{\includegraphics[width=0.5\linewidth]{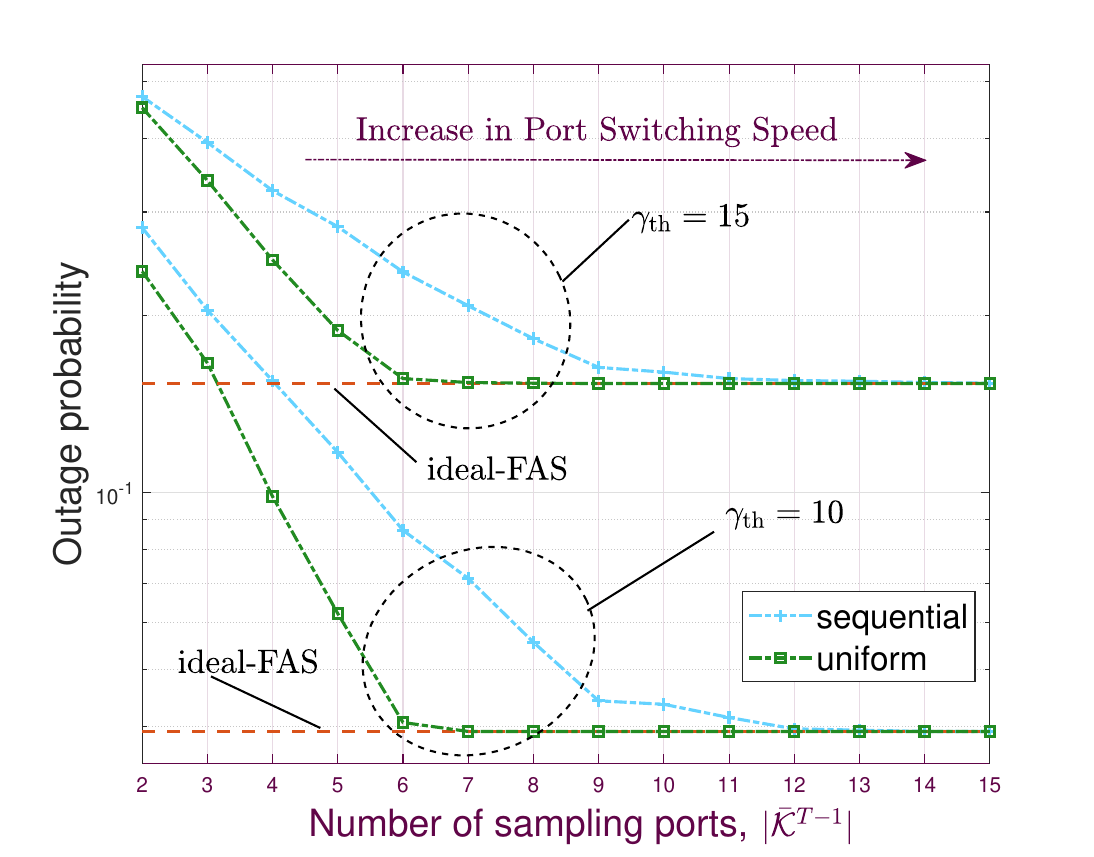}}
    \hfill
    \subfloat[$\tau=2.5\times10^{-4}~\mathrm{s}$~{\color{black}(low temporal resolution)}\label{fig:spatial_b}]{\includegraphics[width=0.5\linewidth]{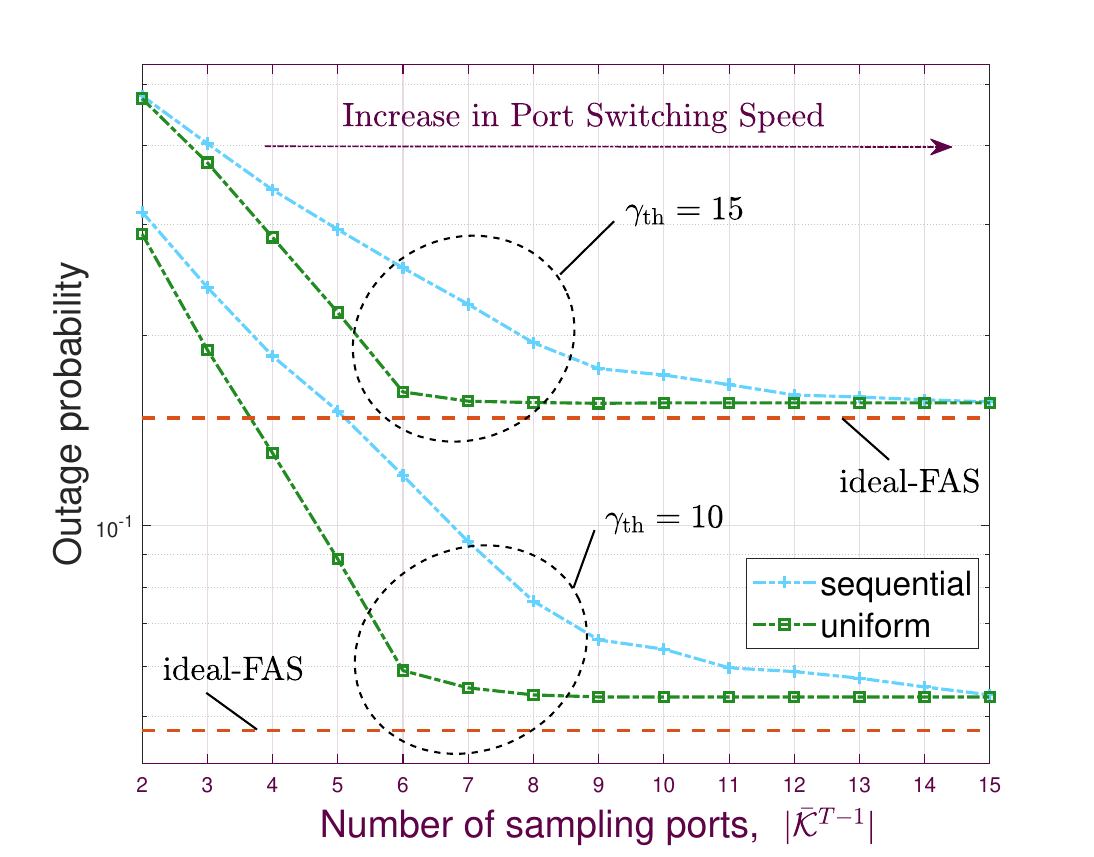}}
  \caption{Outage probability of semi-blind FAS, against the number of sampling ports at the $(T-1)$-th time slot, $|\bar{\mathcal{K}}^{T-1}|$, under different time interval $\tau$, threshold $\gamma_{\mathrm{th}}$ and port sampling strategy, when $W=2$, $K=30$ and $\nu =10 ~\mathrm{m/s}$.}\label{Fig:OP_sampling_ports} 
\end{figure*}

\begin{figure*}[!htb] 
    \centering
  \subfloat[$\tau=10^{-5}~\mathrm{s}$~{\color{black}(high temporal resolution)}\label{fig:temporal_a}]{\includegraphics[width=0.5\linewidth]{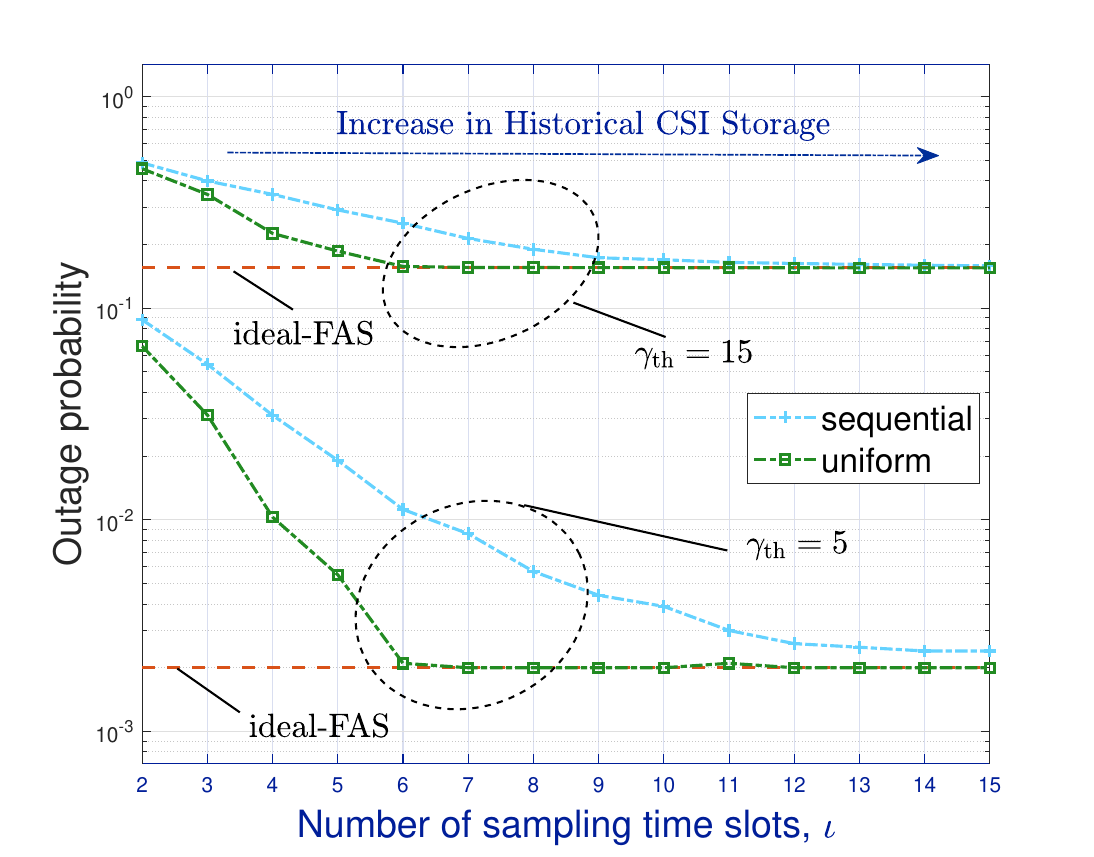}}
    \hfill
    \subfloat[$\tau=10^{-4}~\mathrm{s}$~{\color{black}(low temporal resolution)}\label{fig:temporal_b}]{\includegraphics[width=0.5\linewidth]{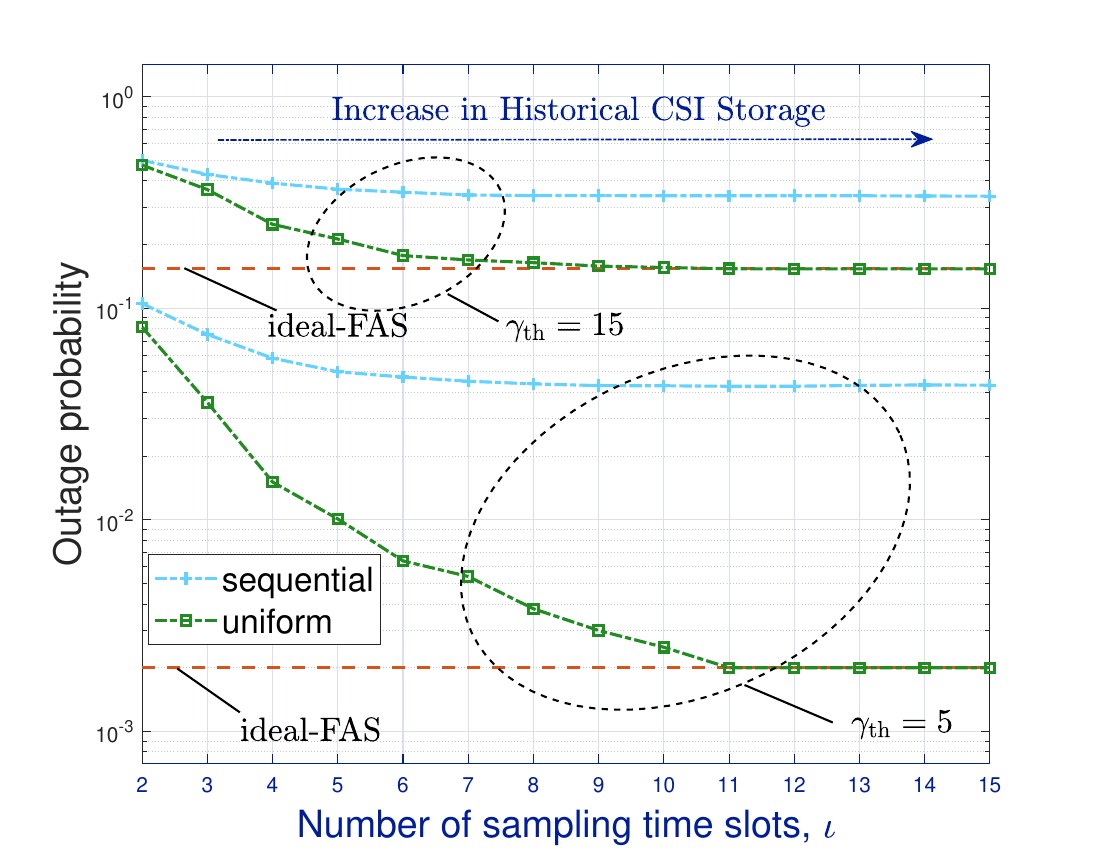}}
  \caption{Outage probability of semi-blind FAS, against the number of sampling time slots $\iota$, under different time interval $\tau$, threshold $\gamma_{\mathrm{th}}$ and port sampling strategy, when $W=2$, $K=30$ and $\nu=10~\mathrm{m/s}$.}\label{Fig:OP_sampling_time_slot} 
\end{figure*}

To analysis further the parameters that influence the location of the statistical optimal port, we refer to the results provided in Fig.~\ref{Fig:OP_Given_Magnitude_Port_Location}. As the x-axis progresses to the right, the region corresponding to lower outage probability gradually shifts from being closer to the $K$-th port towards the $1$-th port. Meanwhile, in the central region of the x-axis, the optimal port tends to be located near the midpoint of the FAS. This phenomenon results from the variation in the given magnitude. Specifically, a port exhibiting a larger magnitude at the $(T-1)$-th time slot is more likely to attain a higher magnitude at the $T$-th time slot. This is due to the naturally high correlation between magnitudes of the same spatial location of FAS over the time domain. Conversely, when the observed magnitudes at both the $1$-th and $K$-th ports at the $(T-1)$-th time slot are low, it implies that the vicinity of these ports is more likely to experience severe fading at the $T$-th time slot. Consequently, ports that are simultaneously distant from both the $K$-th and $1$-th ports achieve a lower outage probability at the $T$-th time slot.

To provide more insights into the statistical optimal location of the proposed semi-blind FAS in achieving the lowest outage probability, we now refer to Fig.~\ref{Fig:Optimzal_Correlation_Coefficient}, which illustrate the optimal correlation coefficient and the corresponding optimal port location based on a single observation. Using expressions \eqref{eq:optimal_correlation} and \eqref{eq:optimal_port_location}, here, only the positive correlation coefficient is reported for clarity of presentation. It is observed that expression \eqref{eq:optimal_correlation} accurately determines the optimal correlation coefficient, while expression \eqref{eq:optimal_port_location} identifies the corresponding optimal port location accordingly. Furthermore, it is found that, in a statistical sense, two ports emerge as the statistical optimal ports. This occurs because, these two ports, exhibit identical correlation coefficients relative to the single observed port, resulting in equivalent statistical performance. 

With the results provided in Fig.~\ref{Fig:OP_Index_Monte},~\ref{Fig:OP_Given_Magnitude_Port_Location} and~\ref{Fig:Optimzal_Correlation_Coefficient} evaluate the outage probability of each semi-blind FAS port, we now study the performance on of the semi-blind FAS on the system level. The results in Fig.~\ref{Fig:OP_sampling_ports} provides the outage probability of the proposed semi-blind FAS against the number of sampling ports, under different port sampling strategy and physical parameters, generated using Monte-Carlo. \textcolor{black}{The ideal-FAS uses full port CSI to determine the optimal port for signal reception, while semi-blind FAS evaluates the statistical optimal port based on the incomplete historical port CSI. In sequential sampling, ports are selected in ascending order starting from the first, while uniform sampling ensures an even distribution across all available ports.} As illustrated in Fig.~\ref{Fig:OP_sampling_ports}\subref{fig:spatial_a}, in the context of \textcolor{black}{high temporal resolution} systems, characterized by short time interval ($\tau=10^{-5}~\mathrm{s}$), with the increase on the number of sampling ports,  the outage probability of semi-blind FAS decreased until it matched that of ideal-FAS for both $\gamma_{\mathrm{th}}=10$ and $\gamma_{\mathrm{th}}=15$, for both sequential and uniform sampling. As for larger time interval ($\tau=2.5\times10^{-4}~\mathrm{s}$), illustrated in Fig.~\ref{Fig:OP_sampling_ports}\subref{fig:spatial_b}, semi-blind FAS can only achieve comparable performance to ideal-FAS, under different threshold $\gamma_{\mathrm{th}}$ and port sampling strategy. This effect arises from the fact that a larger value of $\tau$ leads to reduced time domain correlation, which in turn degrades the reliability of the estimation. Notably, uniform sampling requires fewer sampling ports to achieve convergence in outage probability compared to sequential sampling, under different threshold and time interval.

Despite the results presented in Fig.~\ref{Fig:OP_sampling_ports} demonstrate that the semi-blind FAS can achieve performance comparable to, or even identical with, that of ideal-FAS, as the number of sampling ports increases, this improvement comes with a trade-off. Specifically, increasing the number of sampling ports necessitates a corresponding increase in the required port switching speed, which can be challenging to achieve when the time interval $\tau$ is small. We aims to find a solution that semi-blind FAS can approach the performance of ideal-FAS, \textcolor{black}{but without imposing stringent hardware requirements on the FAS implementation.}

To that end, in Fig.~\ref{Fig:OP_sampling_time_slot}, we consider relaxing the port switching speed requirement, \textcolor{black}{such that the FAS operates at a sufficiently low switching speed and the hardware can support CSI estimation for only a single port at each time slot.} The results in Fig.~\ref{Fig:OP_sampling_time_slot} illustrate how the outage probability of semi-blind FAS varies as the number of sampling time slots, changes. In the \textcolor{black}{high temporal resolution} system ($\tau = 10^{-5}~\mathrm{s}$), as illustrated in Fig.~\ref{Fig:OP_sampling_time_slot}\subref{fig:temporal_a}, with uniform sampling, semi-blind FAS achieve identical outage probability to ideal-FAS, under both $\gamma_{\mathrm{th}} = 5$ and $\gamma_{\mathrm{th}} = 15$. In contrast, sequential sampling semi-blind FAS can only achieve the same outage probability to ideal-FAS when $\gamma_{\mathrm{th}} = 15$, and comparable performance when $\gamma_{\mathrm{th}} = 5$. Moreover, the results in Fig.~\ref{Fig:OP_sampling_time_slot}\subref{fig:temporal_b} provide that, in the \textcolor{black}{low temporal resolution} system ($\tau=10^{-4}~\mathrm{s}$), with the increase on the number of sampling time slots, only uniform sampling can achieve the outage probability identical to that of ideal-FAS, whereas the performance of sequential sampling remains significantly inferior to ideal-FAS, under the considered threshold. This highlights the importance of the port sampling strategy when multiple historical time slots are considered.

To sum up, in the \textcolor{black}{high temporal resolution} system, with the increase on the historical port CSI, semi-blind FAS can approach the performance of ideal-FAS, regardless of the SNR requirement and port sampling strategy. However, in the \textcolor{black}{low temporal resolution} system, the suitable port sampling strategy is essential for semi-blind FAS to effectively approach the performance of ideal-FAS. \textcolor{black}{It should be noted that, even in scenarios where semi-blind FAS cannot achieve the performance of ideal FAS, the semi-blind FAS remains operational and can still benefit from spatial diversity gain, i.e., FAS operates under incomplete CSI.}

It is worth pointing out, the results provided in Fig.~\ref{Fig:OP_sampling_time_slot}, does not rely on the increase of the port switching speed. In fact, with the increase on the number of sampling time slots, the required port switching speed remains the same. The only extra cost incurred is the storage of several historical CSI. It can be understood as, \textcolor{black}{with the} spatial-temporal FAS framework, the semi-blind FAS offloads the CSI estimation burden among different time slots. This benefit is significant important when the number of FAS ports large, which is the key of FAS performance improvement \cite{Fluid_antenna_system,fast_FAMA,slow_FAMA}. These results indicate that, the proposed semi-blind FAS can achieve comparable, even identical performance to that of ideal-FAS, while significantly reduces the required port switching speed, with a negligible cost of historical CSI storage.

\begin{figure}[!ht] 
    \centering
  \subfloat[Residual entropy power ratio\label{fig:R_samping_ports}]{\includegraphics[width=0.50\linewidth]{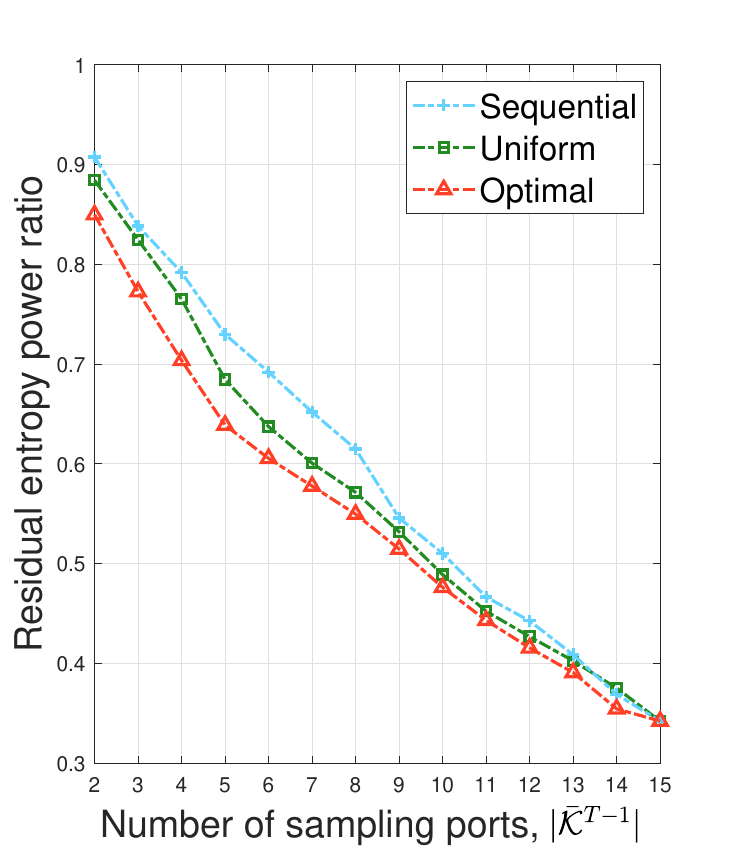}}
    \hfill
  \subfloat[Outage probability\label{fig:OP_samping_ports}]{\includegraphics[width=0.50\linewidth]{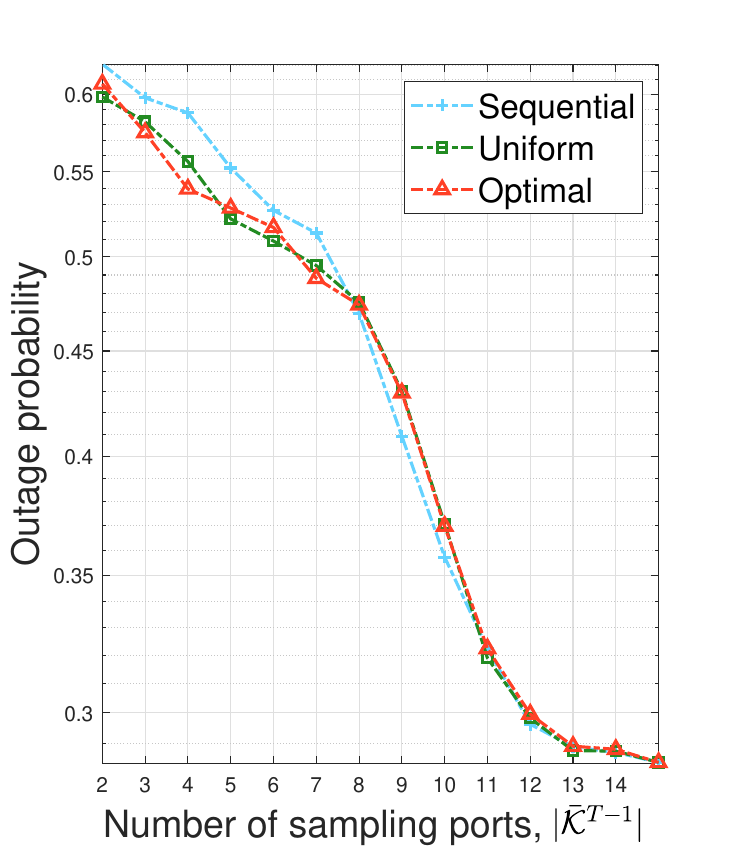}}
  \caption{Different performance metrics against the number of sampling ports at the $(T-1)$-th time slot, $|\bar{\mathcal{K}}^{T-1}|$, given different port sampling strategy, when $W=2$, $K=15$, $\gamma_{\mathrm{th}}=10$, $\tau = 5\times10^{-3}~\mathrm{s}$ and $\nu = 30~\mathrm{m/s}$.}\label{Fig:Residual_Entropy_Activited_port}
\end{figure}

\begin{figure}[]
    \centering
    \includegraphics[width=1.0\linewidth]{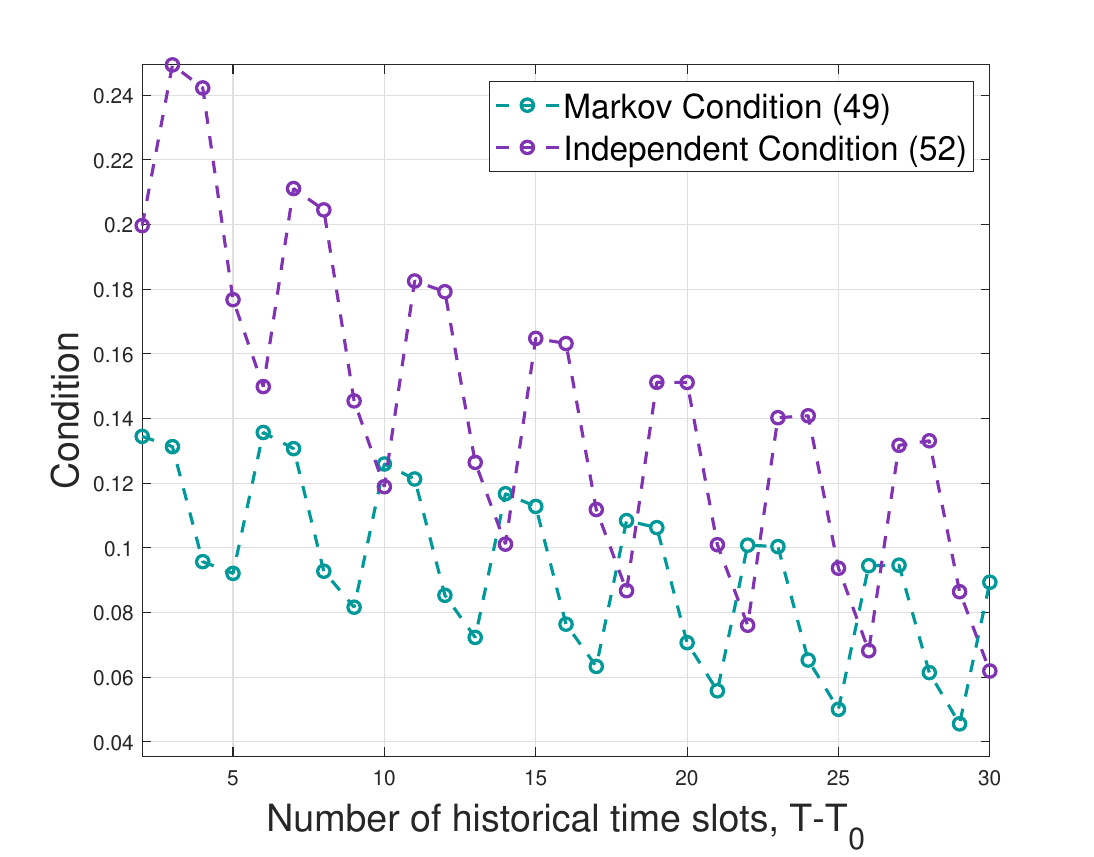}
    \caption{\textcolor{blue}{Markov and independent condition against the historical time range, $T-T_{0}$, when $W=2$, $K=15$, $\tau=2.5\times10^{-4}~\mathrm{s}$ and $\nu = 25~\mathrm{m/s}$.}}
    \label{Fig:Markov_Condition}
\end{figure}

The results in Fig.~\ref{Fig:OP_sampling_ports} and~\ref{Fig:OP_sampling_time_slot} demonstrate that the outage probability of semi-blind FAS can approach that of ideal-FAS, with the cost of sufficient historical port CSI. It also shows that, under appropriate port sampling strategy, e.g, uniform sampling, to achieve the performance as the same as ideal-FAS, the required number of historical CSI can be very limited. We wonder if there excites even better port sampling strategy that can improve the performance of semi-blind FAS further. To this end, we provide the results in Fig.~\ref{Fig:Residual_Entropy_Activited_port}, which investigate the impact of the number of sampling ports on the residual entropy power ratio and outage probability, under different port sampling strategies. Since only a limited set of ports is considered, an exhaustive search method was employed to identify the optimal sampling strategy. As illustrated in Fig.~\ref{Fig:Residual_Entropy_Activited_port}\subref{fig:R_samping_ports}, increasing the number of sampling ports leads to a reduction in the residual entropy power ratio across all three considered methods, with optimal sampling achieving the lowest across all points. Generally, a lower residual entropy power ratio corresponds to better performance. Therefore, the residual entropy power ratio serves as an effective metric for designing the port sampling strategy. However, as shown in Fig.~\ref{Fig:Residual_Entropy_Activited_port}\subref{fig:OP_samping_ports}, the optimal sampling strategy does not consistently achieve the lowest outage probability compared to uniform and sequential sampling. This discrepancy arises because the residual entropy power ratio ensures overall estimation performance, but does not guarantee optimality across every performance metric.

{\color{blue}Finally, Fig. \ref{Fig:Markov_Condition} illustrates the impact of the number of historical time slots on the Markov and independence conditions of the spatial-temporal FAS structure. As the number of historical time slots increases, both conditions tend to be satisfied. The nonlinear decline in both conditions is attributed to the non-monotonic behavior of the correlation function defined in expression \eqref{eq:space_time_correlation_function}. It is worth noting that the Markov condition, i.e., the conditional independence, is more readily satisfied compared to the unconditional independence condition. This suggests that the Markov condition, as outlined in Theorem \ref{theorem:FAS_Markov}, is a more effective criterion for determining the appropriate historical time range for port prediction.}

\section{Conclusion}\label{section:Conclusion}
This paper introduced the spatial-temporal framework of FAS, upon which the semi-blind FAS scheme was developed. \textcolor{black}{The central objective of the semi-blind FAS is to exploit incomplete CSI to approach the performance attainable under complete CSI, that is, the ideal FAS. To this end, semi-blind FAS exploits incomplete historical CSI to estimate the statistical performance of each FAS port at the desired time slot, thereby identifying the optimal port for signal reception.} A closed-form expression for the optimal correlation coefficient, with optimality defined in terms of outage probability minimization, was derived for the special case where only a single historical port CSI is available. \textcolor{black}{Although the optimal port is defined throughout this paper primarily as the one minimizing the outage probability, the semi-blind FAS framework is not limited to this criterion and can be naturally extended to alternative notions of optimality through the same perspective based on the statistical analysis.} To further evaluate semi-blind FAS, we proposed the residual entropy power ratio as a performance metric, and established the Markov condition, which provides useful guidance for practical system design in balancing time slot duration and port switching speed. Numerical results demonstrated that semi-blind FAS achieves performance comparable to, and in some cases indistinguishable from, ideal-FAS, while requiring significantly fewer CSI measurements. The scheme is lightweight, computationally efficient, and scalable to an arbitrary number of ports and time slots, without the need for pre-training or deep learning architectures. These features make semi-blind FAS particularly attractive for deployment on resource-constrained user terminals.

\appendix
\subsection{Proof of Lemma \ref{lemma:Law_Conditional_Normal}}\label{proof:Lemma_Conditional_Normal}
    This result is a direct consequence of the property of conditional distributions of jointly normal random variables. In particular, it is known that when two random variables (or vectors of random variables) are jointly normally distributed, the conditional distribution of one given the other remains normal. 
    Moreover, both the conditional mean and the conditional covariance of this distribution can be expressed explicitly in terms of the means, variances, and covariances of the original variables.  Applying this principle to the specific random variables considered here yields the stated conditional distribution exactly, as detailed in \cite[Proposition 3.13]{conditional_normal}.

\subsection{Proof of Corollary \ref{corollary:OP_given_single_port}}\label{proof:OP_given_single_port}
Conditioned on $h_{\bar{k}}^{T-1} = a_{x}+ja_{y}$, the results in Remark \ref{remark:law_Conditional_Normal_Single} simplified to
\begin{align}\mathcal{L}\left(x_{k}^{T}|x_{{k}^{g}}^{T-1} = a_{x}\right) &= \mathcal{N}\left(\mu_{x,k},\frac{\textcolor{black}{\rho_{k}}}{2}\sigma_{0}^2\right), \\\mathcal{L}\left(y_{k}^{T}|y_{\bar{k}}^{T-1} = a_{y}\right) &= \mathcal{N}\left(\mu_{y,k},\frac{\textcolor{black}{\rho_{k}}}{2}\sigma_{0}^2\right),
\end{align}
where \begin{equation}\label{eq:mean_var_single_port}
    \left\{\begin{aligned}
    \mu_{x,k}&= a_{x}\phi\left(\delta_{k,\bar{k}},\tau\right),\\
    \mu_{y,k}&=  a_{y}\phi\left(\delta_{k,\bar{k}},\tau\right),\\
    \textcolor{black}{\rho_{k}} &= 1-\phi^2\left(\delta_{k,\bar{k}},\tau\right).
    \end{aligned}\right.
    \end{equation}
Subsequently, substituting \eqref{eq:mean_var_single_port} into \eqref{eq:conditional_op}, we achieved \eqref{eq:OP_given_single_port}.

\subsection{Proof of Theorem \ref{theorem:optimal_correlation}}\label{proof:optimal_correlation}
We found the following lemma useful in derivation.
\begin{lemma}
    Let $u = A\sqrt{\frac{\phi^2}{1-\phi^2}}$ and $v = \frac{R}{\sqrt{1-\phi^2}}$, we have the following derivatives as
    \begin{equation}\label{eq:partial_derivative_marcum}
    \left\{\begin{aligned}
   \frac{\partial Q_{1}(u,v)}{\partial u} &= vI_{1}(uv)\exp{\left\{-\frac{u^2+v^2}{2}\right\}},\\
    \frac{\partial Q_{1}(u,v)}{\partial v} &= -vI_{0}(uv)\exp{\left\{-\frac{u^2+v^2}{2}\right\}},
    \end{aligned}\right.
    \end{equation}
    where $I_{0}(\cdot)$ and $I_{1}(\cdot)$ are the modified Bessel function of first kind with order $0$ and $1$, respectively.
\end{lemma}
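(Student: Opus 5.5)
The plan is to work directly from the integral representation of the first-order Marcum $Q$-function,
\begin{align}
Q_{1}(u,v)=\int_{v}^{\infty} x\exp\left\{-\frac{x^2+u^2}{2}\right\}I_{0}(ux)\,\mathrm{d}x ,
\end{align}
and to treat $u$ and $v$ as independent arguments. The particular parametrisation $u=A\sqrt{\phi^2/(1-\phi^2)}$, $v=R/\sqrt{1-\phi^2}$ plays no role here. It only enters later, through the chain rule, when $\mathrm{d}Q_{1}/\mathrm{d}\phi$ is assembled in the proof of Theorem \ref{theorem:optimal_correlation}.

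For the derivative with respect to $v$, the variable appears only in the lower limit of integration. The fundamental theorem of calculus therefore gives at once
\begin{align}
\frac{\partial Q_{1}}{\partial v}=-v\exp\left\{-\frac{u^2+v^2}{2}\right\}I_{0}(uv),
\end{align}
which is the second identity.

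For the derivative with respect to $u$, the steps are as follows.
\begin{enumerate}
\item Differentiate under the integral sign. This is justified by dominated convergence: on compact sets of $u$, the $u$-derivative of the integrand is bounded by a polynomial in $x$ times $\exp\{-x^2/2+|u|x\}$, which is integrable on $[v,\infty)$.
\item Use $I_{0}'=I_{1}$ to write the differentiated integrand as
\begin{align}
\exp\left\{-\frac{x^2+u^2}{2}\right\}\left(x^2 I_{1}(ux)-u\,x\,I_{0}(ux)\right).
\end{align}
\item Recognise this as an exact derivative in $x$. The Bessel identity $\frac{\mathrm{d}}{\mathrm{d}z}\left(zI_{1}(z)\right)=zI_{0}(z)$ gives $\frac{\mathrm{d}}{\mathrm{d}x}\left(xI_{1}(ux)\right)=u\,x\,I_{0}(ux)$. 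Hence the integrand above equals
\begin{align}
\frac{\mathrm{d}}{\mathrm{d}x}\left[-x\,I_{1}(ux)\exp\left\{-\frac{x^2+u^2}{2}\right\}\right].
\end{align}
\item Evaluate between $v$ and $\infty$. The upper limit contributes zero, because $I_{1}(ux)$ grows at most like $e^{ux}$ and this is killed by the Gaussian factor $e^{-x^2/2}$. The lower limit yields
\begin{align}
v\,I_{1}(uv)\exp\left\{-\frac{u^2+v^2}{2}\right\},
\end{align}
which is the first identity.
\end{enumerate}

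The main obstacle is step 3, spotting the antiderivative. If the Bessel recurrence route is awkward, the fallback is to verify the identity by differentiating the candidate antiderivative directly, using $I_{1}'(z)=I_{0}(z)-I_{1}(z)/z$. The remaining steps, namely interchanging derivative and integral and showing the boundary term vanishes at infinity, are routine.
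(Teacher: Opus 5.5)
Your proof is correct, but it takes a different route from the paper. The paper gives no argument of its own: it simply cites Eqs. (2) and (5) of Pratt (1968), where these two partial derivatives of the Marcum $Q$-function are recorded.

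You instead derive both identities directly from the integral definition $Q_{1}(u,v)=\int_{v}^{\infty}x\,e^{-(x^2+u^2)/2}I_{0}(ux)\,\mathrm{d}x$:
\begin{itemize}
\item the $v$-derivative follows from the fundamental theorem of calculus;
\item the $u$-derivative follows by differentiating under the integral sign and recognising the integrand as the exact derivative $\frac{\mathrm{d}}{\mathrm{d}x}\bigl[-x I_{1}(ux)e^{-(x^2+u^2)/2}\bigr]$, which rests on $\frac{\mathrm{d}}{\mathrm{d}z}\bigl(zI_{1}(z)\bigr)=zI_{0}(z)$.
\end{itemize}
I checked the algebra and the boundary evaluation, and both come out right.

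The citation buys brevity. Your argument buys a self-contained proof that makes the regularity issues explicit: dominated convergence via $|I_{n}(ux)|\le e^{|u|x}$, and the vanishing of the boundary term at infinity. It also shows transparently why the $u$-derivative localises to the lower limit. Finally, it makes clear that these are partial derivatives in two independent arguments, as you correctly observe. The $\phi$-parametrisation in the statement plays no role here; it matters only in the subsequent chain-rule computation in the proof of Theorem~\ref{theorem:optimal_correlation}.
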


\begin{proof}
    This proof comes direct from \cite[Eq.(2)\&(5)]{pratt1968partial_marcum}.
\end{proof}

First note that minimum \eqref{eq:OP_given_single_port} is equivalent to maximize $Q_{1}\left(u,v\right)$. Consequently, the first order derivative of $Q_{1}\left(u,v\right)$ with respect to $\phi$ is given by 
\begin{align}
    \frac{\partial Q_{1}\left(u,v\right)}{\partial \phi} &\overset{(a)}{=} \frac{\partial Q_{1}(u,v)}{\partial u}\frac{\partial u}{\partial \phi}+\frac{\partial Q_{1}(u,v)}{\partial v}\frac{\partial v}{\partial \phi}\nonumber \\
    &\overset{(b)}{=} v\exp{\left\{-\frac{u^2+v^2}{2}\right\}}\frac{1}{(1-\phi^2)^{\frac{3}{2}}}\nonumber\\
    &\quad\quad\quad\times\left(AI_{1}(uv)-\phi RI_{0}(uv)\right),
\end{align}
where $(a)$ is the result of the chain rule, and $(b)$ involves the substitution of \eqref{eq:partial_derivative_marcum} with further simplification. Subsequently, the solution of $\frac{\partial Q_{1}\left(u,v\right)}{\partial \phi}=0$ is given by
\begin{align}\label{eq:zero_derivative_marcum}
    g(\phi)=\phi\frac{I_{0}\left(AR\frac{\phi}{1-\phi^2}\right)}{I_{1}\left(AR\frac{\phi}{1-\phi^2}\right)} = \frac{A}{R}.
\end{align}
Noting that $g(-\phi) = g(\phi)$, which implies $\phi = \pm g^{-1}(\frac{A}{R})$.

Moreover, it is evident that the monotonicity of $Q_{1}\left(u,v\right)$ depends solely on $AI_{1}(uv)-\phi RI_{0}(uv)$. However, the monotonicity of $AI_{1}(uv)-\phi RI_{0}(uv)$ is difficult to characterize analytically. Therefore, the solution of \eqref{eq:zero_derivative_marcum} can only be regarded as a candidate optimal point. Consequently, the optimal correlation coefficient of \eqref{eq:OP_given_single_port} is given in \eqref{eq:optimal_correlation}. This ends the proof.

\subsection{Proof of Theorem \ref{theorem:residual_entropy_power_ratio}}\label{proof:residual_entropy_power_ratio}

\textcolor{black}{According to \cite{shannon1948mathematical}} and \cite[Theorem 8.4.1]{cover2005differential}, the entropy power of $\hat{\boldsymbol{\mathrm{h}}}$ is given by
\begin{align}\label{eq:entropy_power_h}
    \mathcal{E}\left(\hat{\boldsymbol{\mathrm{h}}}\right) ={\left(\det{\left(\boldsymbol{\Sigma}_{\hat{\boldsymbol{\mathrm{h}}}} \right)}\right)}^{\frac{1}{K}}.
\end{align}
Consequently, the entropy power of $\hat{\boldsymbol{\mathrm{h}}}{\big|}\bar{\boldsymbol{\mathrm{h}}}$ is given by
\begin{align}\label{eq:entropy_power_conditional_h}
    \mathcal{E}\left(\hat{\boldsymbol{\mathrm{h}}}{\big|}\bar{\boldsymbol{\mathrm{h}}}\right) = {\left(\det{\left(\boldsymbol{\Sigma}_{\hat{\boldsymbol{\mathrm{h}}}|\bar{\boldsymbol{\mathrm{h}}}}\right)}\right)}^{\frac{1}{K}}.
\end{align}
Substituting \eqref{eq:entropy_power_h} and \eqref{eq:entropy_power_conditional_h} into Definition \ref{Def:Residual_entropy_power_ratio}, we have the result that expressed as
\eqref{eq:residual_entropy_power_ratio_partial_FAS}. This ends the proof.

\subsection{Proof of Corollary \ref{corollary:residual_exp_entropy_simplify}}\label{proof:residual_exp_entropy_simplify}
The residual entropy power ratio of the semi-blind FAS can be expressed as
\begin{align}
\mathcal{R}\left(\hat{\boldsymbol{\mathrm{h}}}{\big|}\bar{\boldsymbol{\mathrm{h}}}\right) &\overset{(a)}{=} {\left(\frac{\det{\left(\boldsymbol{\Sigma}_{\hat{\boldsymbol{\mathrm{h}}}} -\boldsymbol{\Sigma}_{\hat{\boldsymbol{\mathrm{h}}},\bar{\boldsymbol{\mathrm{h}}}} \boldsymbol{\Sigma}_{\bar{\boldsymbol{\mathrm{h}}}}^{-1} \boldsymbol{\Sigma}_{\hat{\boldsymbol{\mathrm{h}}},\bar{\boldsymbol{\mathrm{h}}}}^{\top}\right)}}{\det{\left(\boldsymbol{\Sigma}_{\hat{\boldsymbol{\mathrm{h}}}} \right)}}\right)}^{\frac{1}{K}},\\
&\overset{(b)}{=}{\left(\frac{\det{\left(\boldsymbol{\Sigma}_{\hat{\boldsymbol{\mathrm{h}}}}\left(\boldsymbol{\mathrm{I}}_{K} - \boldsymbol{\Sigma}_{\hat{\boldsymbol{\mathrm{h}}}}^{-1}\boldsymbol{\Sigma}_{\hat{\boldsymbol{\mathrm{h}}},\bar{\boldsymbol{\mathrm{h}}}} \boldsymbol{\Sigma}_{\bar{\boldsymbol{\mathrm{h}}}}^{-1} \boldsymbol{\Sigma}_{\hat{\boldsymbol{\mathrm{h}}},\bar{\boldsymbol{\mathrm{h}}}}^{\top}\right)\right)}}{\det{\left(\boldsymbol{\Sigma}_{\hat{\boldsymbol{\mathrm{h}}}} \right)}}\right)}^{\frac{1}{K}},\nonumber
\end{align}
where $(a)$ used the results in \eqref{eq:correlation_Matirx}, \eqref{eq:conditional_correlation_matrix_h} and \eqref{eq:residual_entropy_power_ratio_partial_FAS}, $(b)$ used the fact that $\boldsymbol{\Sigma}_{\hat{\boldsymbol{\mathrm{h}}}}$ is invertible. After fuhrer simplification, we obtain the result that expressed as \eqref{eq:residual_power_entropy_simplify}.

\subsection{Proof of Theorem \ref{theorem:FAS_Markov}}\label{proof:FAS_Markov}
Taking into account the Markov process, the correlation matrix of $\left[\boldsymbol{\mathrm{h}}_{\mathcal{\hat{K}}^{T}}^{T} ,\boldsymbol{\mathrm{h}}_{\mathcal{\bar{K}}^{T_{0}-1}}^{T_{0}-1},\boldsymbol{\mathrm{h}}_{\mathcal{\bar{K}}^{T_{0}}}^{T_{0}}\right]$
is given by 
\begin{align}
    \left[\begin{array}{ccc}
    \boldsymbol{\Sigma}_{{\mathcal{\hat{K}}^{T}}} &\boldsymbol{\Sigma}_{{\mathcal{\hat{K}}^{T}},{\mathcal{\bar{K}}^{T_{0}-1}}} &\boldsymbol{\Sigma}_{{\mathcal{\hat{K}}^{T}},{\mathcal{\bar{K}}^{T_{0}}}}\\
     \boldsymbol{\Sigma}_{{\mathcal{\hat{K}}^{T}},{\mathcal{\bar{K}}^{T_{0}-1}}}^{\top} & \boldsymbol{\Sigma}_{{\mathcal{\bar{K}}^{T_{0}-1}}} &\boldsymbol{\Sigma}_{{\mathcal{\bar{K}}^{T_{0}-1}},{\mathcal{\bar{K}}^{T_{0}}}}\\ \boldsymbol{\Sigma}_{{\mathcal{\hat{K}}^{T}},{\mathcal{\bar{K}}^{T_{0}}}}^{\top}& \boldsymbol{\Sigma}_{{\mathcal{\bar{K}}^{T_{0}-1}},{\mathcal{\bar{K}}^{T_{0}}}}^{\top}& \boldsymbol{\Sigma}_{{\mathcal{\bar{K}}^{T_{0}}}}
    \end{array}
    \right],
\end{align}which can be further expressed as
\begin{align}\label{eq:space_time_cov_matrix_Markov}\left[
    \begin{array}{cc}\boldsymbol{\Sigma}_{{\mathcal{\hat{K}}^{T}},{\mathcal{\bar{K}}^{T_{0}-1}}}   &\boldsymbol{\Sigma}_{({\mathcal{\hat{K}}^{T}},{\mathcal{\bar{K}}^{T_{0}-1}}),\mathcal{\bar{K}}^{T_{0}}}\\ \boldsymbol{\Sigma}_{({\mathcal{\hat{K}}^{T}},{\mathcal{\bar{K}}^{T_{0}-1}}),\mathcal{\bar{K}}^{T_{0}}}^{\top}& \boldsymbol{\Sigma}_{{\mathcal{\bar{K}}^{T_{0}}}}
    \end{array}
    \right],
\end{align}
where each sub-matrix is given in \eqref{eq:M_matrix_component}. The conditional correlation matrix between $\boldsymbol{\mathrm{h}}_{\mathcal{\hat{K}}^{T}}^{T}$and $\boldsymbol{\mathrm{h}}_{\mathcal{\bar{K}}^{T_{0}-1}}^{T_{0}-1}$ given $\boldsymbol{\mathrm{h}}_{\mathcal{\bar{K}}^{T_{0}}}^{T_{0}}$, is given by \eqref{eq:conditional_correlation_matrix_Markov}. Consequently, conditioned on $\boldsymbol{\mathrm{h}}_{\mathcal{\bar{K}}^{T_{0}}}^{T_{0}}$, the cross-correlation between $\boldsymbol{\mathrm{h}}_{\mathcal{\hat{K}}^{T}}^{T}$ and $\boldsymbol{\mathrm{h}}_{\mathcal{\bar{K}}^{T_{0}-1}}^{T_{0}-1}$ is captured by the submatrix of the conditional correlation matrix in \eqref{eq:conditional_correlation_matrix_Markov}, specifically, given by the block $(\boldsymbol{\Sigma}_{{\mathcal{\hat{K}}^{T}}, {\mathcal{\bar{K}}^{T_{0}-1}} \mid \mathcal{\bar{K}}^{T_{0}}})_{1:K,1+K:2K}$. Subsequently, according to \cite[Definition 4.9, Theorem 4.1]{Markov}, the Markov conditional independence property is satisfied given 
\begin{align}
    (\boldsymbol{\Sigma}_{{\mathcal{\hat{K}}^{T}}, {\mathcal{\bar{K}}^{T_{0}-1}} \mid \mathcal{\bar{K}}^{T_{0}}})_{1:K,1+K:2K} = \boldsymbol{\mathrm{0}},
\end{align}
where $\boldsymbol{\mathrm{0}}$ is the zero matrix. However, the zero matrix is hard to achieve due to the compactness structure of FAS, and therefore, we approximated it as expressed in \eqref{eq:FAS_Markov_condition}, where $K^2$ serves as the normalization factor.

\bibliographystyle{IEEEtran}
\bibliography{Reference_List.bib}

\end{document}